\title{Probing Biological and Artificial Neural Networks with
Task-dependent Neural Manifolds}
\author{%
  Michael Kuoch\textsuperscript{1,2}\thanks{Equal contribution.}, ~Chi-Ning Chou\textsuperscript{2,3}\footnote[1]{}, ~Nikhil Parthasarathy\textsuperscript{2,3}, ~Joel Dapello\textsuperscript{1,4}, \\ ~James J. DiCarlo\textsuperscript{1}, ~Haim Sompolinsky\textsuperscript{4,5}, ~SueYeon Chung\textsuperscript{2,3} \\
  \textsuperscript{1}Masssachusetts Institute of Technology, \textsuperscript{2}Flatiron Institute,  Simons Foundation,\\
  \textsuperscript{3}New York University,
   \textsuperscript{4}Harvard University,
   \textsuperscript{5}Hebrew University of Jerusalem\\
  \texttt{mikuoch@mit.edu, \{cchou, schung\}@flatironinstitute.org}
}
\newcommand{\preparasep}{-3mm}
\title{Probing Biological and Artificial Neural Networks with Task-dependent Neural Manifolds}
\newtheorem{proposition}{Proposition}
\begin{document}

\maketitle

\begin{abstract}

    Recently, growth in our understanding of the computations performed in both biological and artificial neural networks has largely been driven by either low-level mechanistic studies or global normative approaches. However, concrete methodologies for bridging the gap between these levels of abstraction remain elusive. In this work, we investigate the internal mechanisms of neural networks through the lens of neural population geometry, aiming to provide understanding at an intermediate level of abstraction, as a way to bridge that gap. Utilizing manifold capacity theory (MCT) from statistical physics and manifold alignment analysis (MAA) from high-dimensional statistics, we probe the underlying organization of task-dependent manifolds in deep neural networks and macaque neural recordings. Specifically, we quantitatively characterize how different learning objectives lead to differences in the organizational strategies of these models and demonstrate how these geometric analyses are connected to the decodability of task-relevant information.   These analyses present a strong direction for bridging mechanistic and normative theories in neural networks through neural population geometry, potentially opening up many future research avenues in both machine learning and neuroscience.
\end{abstract}

\section{Introduction}
Unsupervised learning, specifically unsupervised deep neural networks (DNNs), have become increasingly prominent in the landscape of modern machine learning due to their ability to learn useful statistics and representations unlabeled data.  They provide advantages over classical supervised DNNs in terms of cost, flexibility, and performances in various applications including image recognition~\cite{caron2019unsupervised}, natural language processing~\cite{devlin2018bert}, speech processing~\cite{schneider2019wav2vec}, and beyond. Despite this popularity, we still lack an intuitive and mechanistic understanding of how these unsupervised models differ from their supervised counterparts.  Traditional performance metrics are limited because they focus only on end performance, without opening the ``black box'' of DNNs. 

Meanwhile, unsupervised neural networks have gained popularity in the neuroscience community as promising models of the brain.  Like biological neural networks, unsupervised DNNs can learn useful information about inputs without relying on large amounts of labeled data.  Furthermore, previous work has found that unsupervised DNNs generate representations similar to the brain in terms of prediction accuracy of neural data~\cite{chengxu}, and similar to those using supervised training ~\cite{yamins2014performance}. However, these metrics are limited because they provide similar results for many different types of models, and they can not explain why the certain models are more similar and what this similarity means mechanistically in terms of task-performance.

In this work, we investigate these questions via a framework based on neural population geometry. Our framework offers insights into why unsupervised approaches excel at certain tasks and could lead to potential strategies for designing improved learning algorithms. Indeed, recent work has demonstrated how such an intermediate understanding can facilitate robustness in self-supervised learning~\cite{yerxa2023learning}. In parallel, a better understanding of the internal mechanisms of unsupervised DNNs could potentially illuminate the underlying learning principles adopted by the brain~\cite{richards2019deep}, as well as provide new ways to compare DNNs to the brain.

\paragraph{Neural population geometry and organization hypotheses.}
Neural population geometry~\cite{chung2021neural} refers to the study of the connection between the high-dimensional geometry of neural representations (i.e., the collections of neural activities) and the underlying computations (e.g., task performance)~\cite{NEURIPS2019_IntrinsicDimension,Cohen2020}. Through intuitive geometric and statistical quantitative measures (e.g., manifold classification capacity, participation ratio, intrinsic dimension, etc.), it naturally serves as an intermediate language to bridge high-level computational principles and detailed neural mechanisms.

\vspace{-3mm}
\begin{figure}[ht]
    \centering
    {\includegraphics[width=1\textwidth]{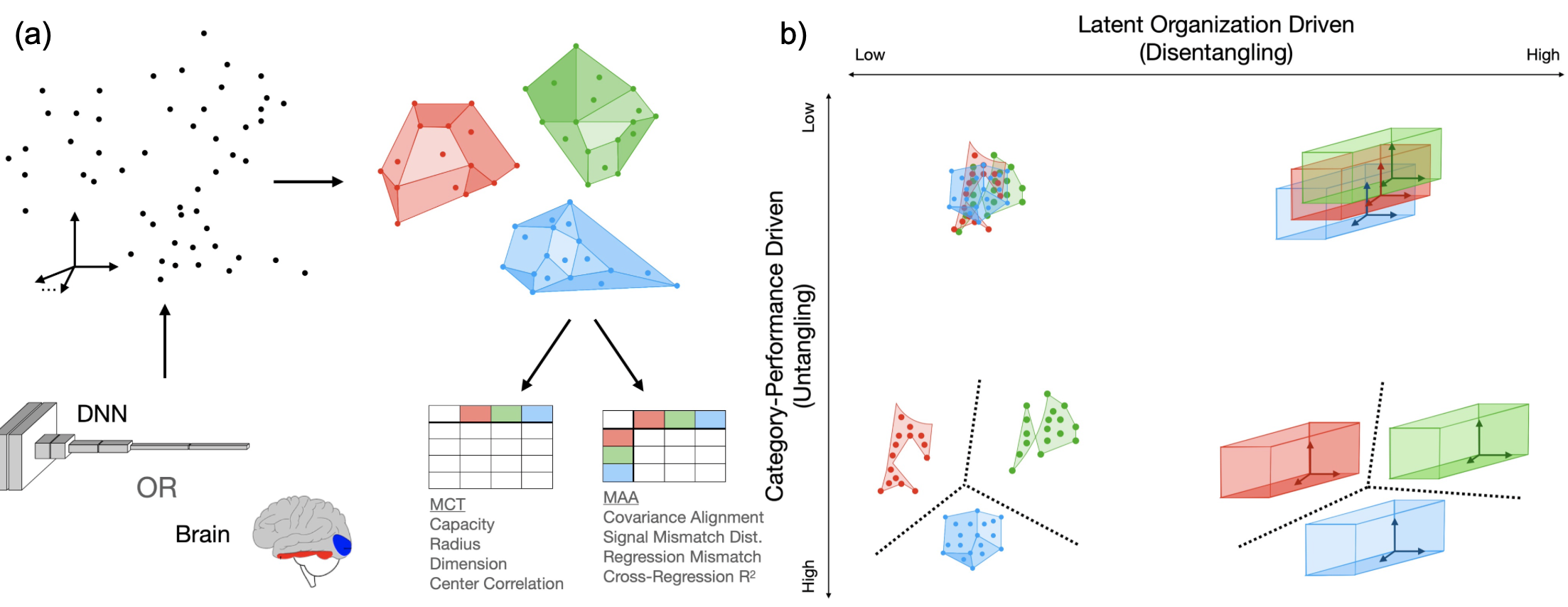}}
      \caption{(a) We measure geometrical properties of task-dependent manifolds from DNNs and macaque brain neural recordings.  (b) Potential strategies (Category-Performance Driven Hypothesis and Latent Organization Driven Hypothesis) used to organize representations in neural networks.}
    \label{fig:overview}
\end{figure}

We examine the organizations of neural representations to investigate the differences between distinct learning paradigms in artificial and biological neural networks.
Concretely, we study the following two organization hypotheses in this paper: (i) the Category-Performance Driven Hypothesis and (ii) the Latent Organization Driven Hypothesis. As illustrated in Figure~\ref{fig:overview}(b), the former hypothesizes that neural representations are organized in a way that makes the object manifolds easily separable (a.k.a., untangling), while the latter hypothesizes that neural representations are organized according to latent information in the stimuli (a.k.a., disentangling). 
We utilize several geometric and quantitative tools to pinpoint the extent that each of these hypothesis manifest in supervised and unsupervised learning algorithms as well as in recordings from macaque visual cortex.

\vspace{\preparasep}
\paragraph{An overview on our methods and results.}
Task-dependent manifolds refer to neural manifolds that are associated with a certain computational task. In this work, we are interested in two types of task-dependent manifolds: (i) \textit{the object manifold}, which corresponds to the neural representations of a stimuli in a classification task; (ii) \textit{the latent variation manifold}, which corresponds to the neural representations labeled with a latent feature in a regression task.  
We utilize Manifold Capacity Theory (MCT)~\cite{PhysRevX.8.031003} and Manifold Alignment Analysis (MAA) to analyze the task-dependent manifolds of supervised and unsupervised DNNs, as well as compare these representational properties to manifolds from macaque visual cortex.
While MCT provides geometric and quantitative measures (e.g., manifold capacity, radius, and dimension) on object manifolds to understand the linear classification performance of a neural network, MAA further investigates the organizations of multiple latent variation manifolds by considering the manifold orientation and alignment. Armed with these new geometrical viewpoints, we present the following findings:
\begin{itemize}[leftmargin=*]
    \setlength\itemsep{0.5pt}
    \item \textbf{(The geometry of object neural manifolds)} The representations of supervised and unsupervised DNNs differ by their size, with supervised models achieving higher class manifold capacity by shrinking their class manifolds to a greater extent than unsupervised models.
    \item \textbf{(The alignment across object manifolds)} The object manifolds of unsupervised DNNs are more aligned in the representational space then the object manifolds of supervised DNNs.
    \item \textbf{(Decodability of task-related information from latent variation manifolds)} Stronger manifold alignment is associated with lower regression error, suggesting a potential advantage of learning more aligned representations.
\end{itemize}

These findings together suggest that the unsupervised models and supervised DNNs differ in their organizational strategies.  Supervised DNNs have representation that are more specialized for classification, displaying a higher degree of category-performance driven organization.  On the other hand, Unsupervised DNNs tend to show greater latent organization driven representations that can that retain more general information about input stimuli.

\section{Methods}
\subsection{Geometric and quantitative tools}\label{sec:geometric toolkit}
We utilize quantitative measures from manifold capacity theory (MCT) and manifold alignment analysis (MAA) as an intermediate layer of abstraction that connects the underlying computation performed by neural networks to the geometry of neural representations. By examining these macroscopic observables of the object manifolds, we quantitatively compare different learning algorithms and investigate the organization hypotheses. 

\vspace{-3mm}
\begin{figure}[ht]
    \centering
    \subfigure[MCT and relevant measures.]{\includegraphics[width=0.64\textwidth]{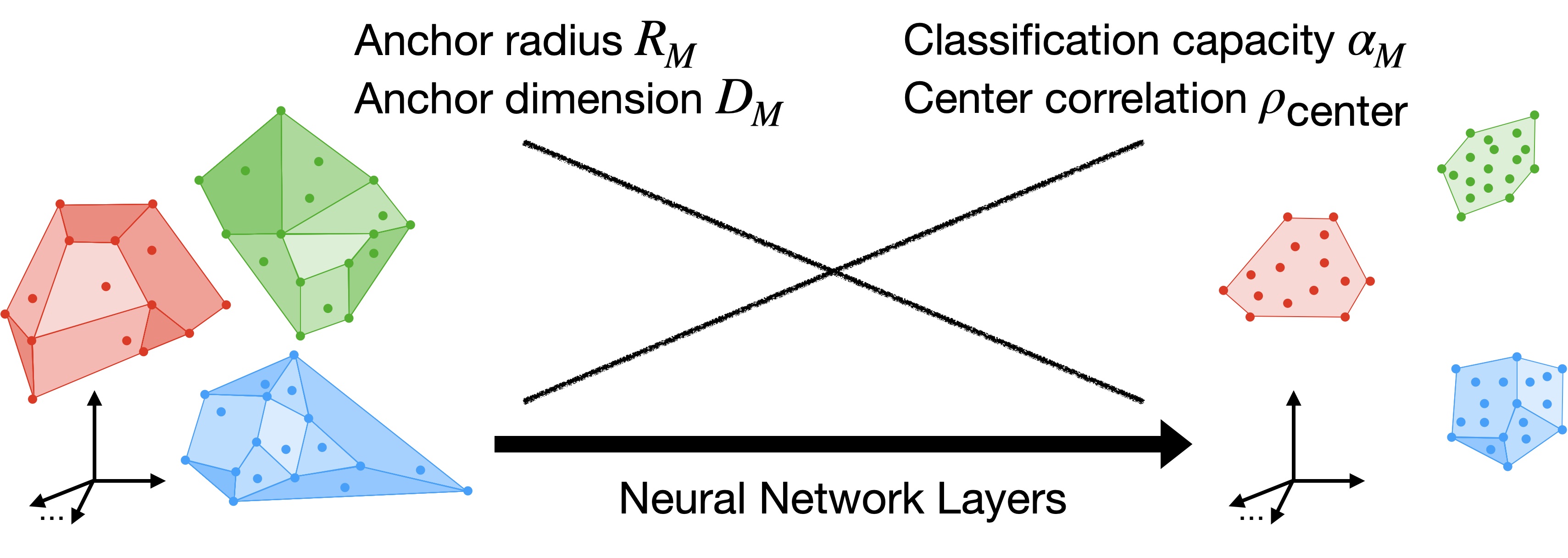}}
    \hspace{5mm}
    \subfigure[MAA (e.g., Bures).]{\includegraphics[width=0.21\textwidth]{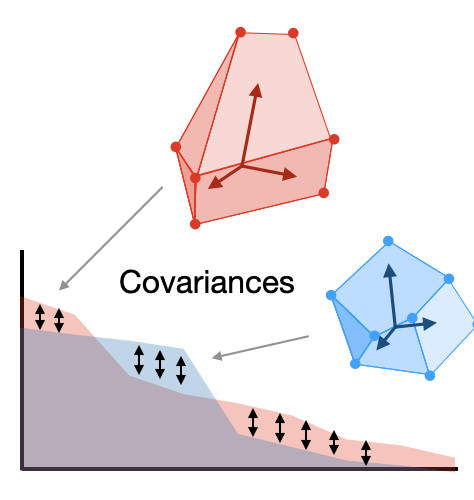}}
    \caption{(a) In MCT, we expect the radius and dimension of manifolds will decrease across neural network layers. Meanwhile, we expect the classification capacity and the center correlation to increase. This figure is an illustration of the intuitive picture of how the organization changes. (b) The covariance alignment distance from Bures metrics is in analogous to the Wasserstein distance, which intuitively captures the minimum cost to turn one distribution into another.}
    \label{fig:MCT illustration}
\end{figure}

\textbf{Manifold capacity theory (MCT)} \cite{PhysRevX.8.031003} uses tools and concepts from statistical physics to quantify the linear classifiability of object manifolds. It has previously been used in the study of biological neural networks~\cite{froudarakis2020object, yao2023transformation}, deep learning models~\cite{Cohen2020,chung2021neural,stephenson2021geometry}, and self-supervised learning~\cite{yerxa2023learning}. The resulting measure of manifold classification capacity, denoted $\alpha_M$, takes value between $0$ and $2$ where the higher the capacity, the more efficient the neural network is representing the object. The derivation using replica method from theoretical physics further reveals that $\alpha_M$ is linked to two geometric quantities of the manifold, namely the \textit{anchor radius} $R_M$ and the \textit{anchor dimension} $D_M$. Specifically, $\alpha_M$ can be approximated by $(1+R_M^{-2})/D_M$~\cite{PhysRevX.8.031003}. That is, the smaller the radius and the dimension, the more efficient the neural representations are. We also adopt from MCT the center correlation $\rho_{center}$, which measures how correlated the center locations of each object manifold are. Concretely, high $\rho_{center}$ would suggest that the manifolds are clustering in the neural activity space. The intuitive picture of the quantitative measures discussed above are summarized in Figure~\ref{fig:MCT illustration}(a).  We use \textbf{MCT to probe how much neural networks utilize the category-performance driven organizational hypothesis}, i.e., higher classification capacity corresponds to a higher degree of category-performance driven organization.

\vspace{\preparasep}
\paragraph{Manifold alignment analysis (MAA).} MCT provides us with information on manifold capacity, dimension, and radius from the perspective of a linear separability of the object manifolds.  However, these properties might not capture the underlying latent organizations of manifolds. As such, we propose additional metrics to complement the geometrical picture revealed by MCT. Specifically, we measure how aligned manifolds are in the representational space to help us identify the extent of latent organization in the representations (Figure \ref{fig:overview}b, bottom row).  We use \textbf{MAA to probe how much neural networks utilize the latent-organization driven organizational hypothesis}, i.e., higher degree of alignment corresponds to a higher degree of latent-organization, as latent information becomes more structured in the manifold representations. In the following we introduce two of the main metrics used in MAA and leave the rest in the Appendices.

\noindent\textit{Covariance alignment distance from Bures metric.}
One approach to measuring manifold alignment is to model each object manifold as a Gaussian distribution and measure the empirical covariance of the object representations. Under this formulation, the empirical covariance describes the orientation of the object manifold in the representational space., allowing us to use the distance between object manifolds' covarainces as a proxy for their alignment. 
Concretely, we use a variant of the covariance term of the Bures metric to measure differences between object covariance distances. To measure manifold alignment, we specifically compute the Bures covariance distance on the trace-normalized object covariances to remove confounding factors (such as covariance scale) that could influence the Bures metric. See Appendix~\ref{wasserstein_appendix} for an in-depth discussion on the covariance alignment distance and see Figure~\ref{fig:MCT illustration}(b) for a pictorial illustration.

\noindent\textit{Signal mismatch distance from linear regression.}
We introduce the signal mismatch distance, a geometric measure that captures the regression performance of manifold alignment. Formally, we model a manifold $M$ as $\{(x^M(\theta,\psi),\theta)\}$ where $x^M(\theta,\psi)$ is a neural representation parameterized by feature value $\theta$ and in-manifold variability $\psi$. The least squares error of linearly regressing on $M$ (without the biased term, or equivalently, being mean centered) is
\[
LSE^{M} = \langle\theta^2\rangle_M - (\bar{x}^M)^\top(C^M)^{-1}(\bar{x}^M)
\]
where $\bar{x}^M=\langle[\theta\cdot x^M(\theta,\psi)\rangle_M$ and $C^M = \langle x^M(\theta,\psi)(x^M(\theta,\psi))^\top\rangle_M$ and $\langle\cdot\rangle_M$ refers to taking average over the manifold $M$. Here we are interested in two manifold $A$ and $B$ as well as their union $A\cup B$. Finally, consider the difference between the least square error of the union of the two manifolds and the average of the least square error of the two individual manifolds, we have
\begin{align}
\left(\frac{LSE^{A}+LSE^{B}}{2}\right) - LSE^{A\cup B} &= \frac{1}{2}(\bar{x}^A-\bar{x}^B)^\top(C^A+C^B)^{-1}(\bar{x}^A-\bar{x}^B) \label{eq:smd 1}\\
&+ \frac{1}{2}(\bar{x}^A)^\top[(C^A)^{-1}-2(C^A+C^B)^{-1}](\bar{x}^A) \label{eq:smd 2}\\
&+ \frac{1}{2}(\bar{x}^B)^\top[(C^B)^{-1}-2(C^A+C^B)^{-1}](\bar{x}^B) \, . \label{eq:smd 3}
\end{align}
The above difference of least square errors captures the amount of signal alignment of the two manifolds. Meanwhile, under mild statistical assumptions,~\autoref{eq:smd 2} and~\autoref{eq:smd 3} are independent to the signal correlations between the two manifolds. Namely, the information about signal alignment of the two manifolds is fully contained in~\autoref{eq:smd 1}. As this term naturally looks like a geometric distance, we define the (normalized) signal mismatch distance between manifold $A$ and $B$ as
\[
d_{\text{signal mismatch}}(A,B) = \frac{(\bar{x}^A-\bar{x}^B)^\top(C^A+C^B)^{-1}(\bar{x}^A-\bar{x}^B)}{(\bar{x}^A)^\top(C^A+C^B)^{-1}(\bar{x}^A)+(\bar{x}^B)^\top(C^A+C^B)^{-1}(\bar{x}^B)} \, .
\]
Signal mismatch distance is designed for probing local manifold properties generated by small variations of latent parameters around each exemplar's representation, so that irrelevant large-scale nuisance variable would not dominate the signal. See Appendix~\ref{app:smd} for more discussions.

\vspace{-2mm}
\subsection{Models and datasets.} \label{models and datasets}

In this paper, we utilize a dataset consisting of images of three-dimensional objects overlayed on top of a background image (\cite{majaj2015simple}, \cite{Hong2016}).  There are 64 different objects that can be grouped into 8 superclasses.  In each of the images, the object is associated with 6 viewing parameters: size, position (x and y), and rotation (x, y, and z).  Examples images are shown in Figure~\ref{fig: hvm_examples}.

\begin{figure}[ht]
    \centering
    \includegraphics[width=0.8\textwidth]{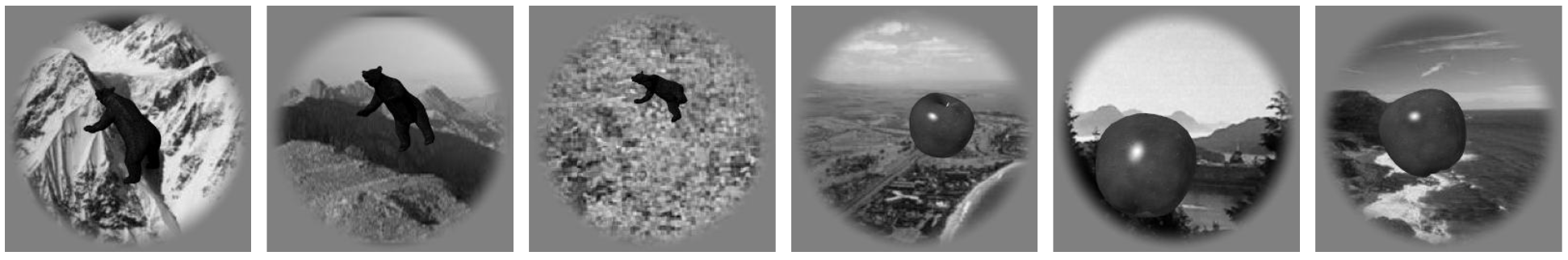}
    \caption{Examples images from the "bear" and "apple" object categories.  As in \cite{majaj2015simple}, \cite{Hong2016}, each image is composed of a 3D object overlayed on a background image.  The 3D object has six variable parameters: size, position (x and y), and rotation (x, y, and z).}
    \label{fig: hvm_examples}
\end{figure}
\vspace{-5mm}

We use this dataset to investigate the representations of a variety DNNs trained with different algorithms. To control for model architecture, each of the DNNs we investigated used a ResNet-50 architecture~\cite{resnet}, eliminating variation that would arise due to architecture differences.  We explored the following learning algorithms: Supervised, Supervised with random erase~\cite{zhong2020random}, Supervised with random erase + random augment,  Barlow Twins~\cite{zbontar2021barlow}, DeepClusterV2~\cite{deepclusterv1}\cite{swav},  SWAV~\cite{swav}, SimCLR~\cite{simclr}, PIRL~\cite{pirl}, BYOL~\cite{grill2020bootstrap}, VICreg~\cite{bardes2021vicreg}.  Each of these models were trained on ImageNet images~\cite{deng2009imagenet}.  For the figures in this paper, we aggregate the results from supervised and unsupervised models.  See Appendix~\ref{model_appendix} for information on these models' performance and sources.
\vspace{\preparasep}
\paragraph{Recordings from macaque visual cortex.}
We use macaque visual cortex recordings from V4 and IT to this same set of images, using micro-electrode arrays with 96 electrodes per array being surgically implanted into the brains of macaque monkeys.  See~\cite{dapello2023aligning} for more details on visual cortex recordings.

\section{Results}

\subsection{Manifold capacity theory and the geometry of task-dependent manifolds}\label{sec:MCT details}

\begin{figure}[ht]
    \centering
    \includegraphics[width=1\textwidth]{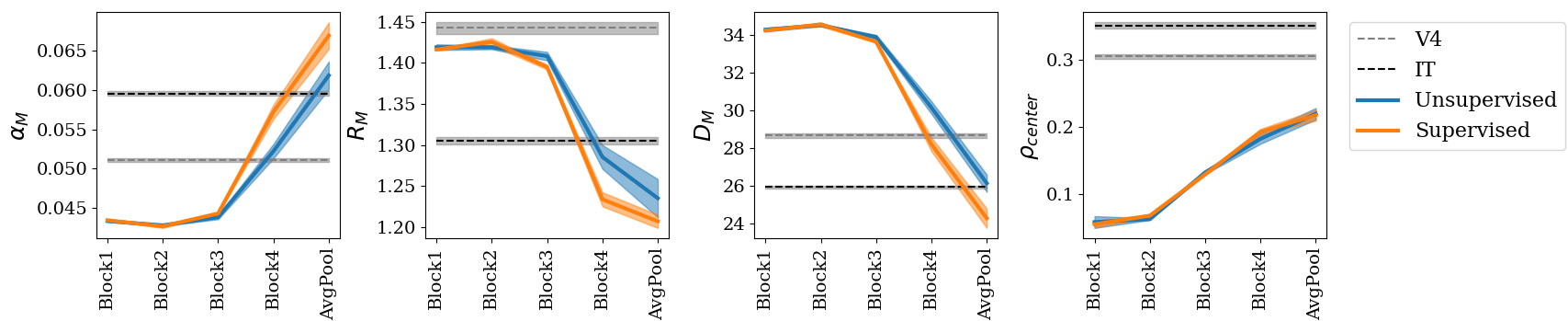}
    \caption{For each DNN, we extracted hidden representations of each image from the output of the four ResNet blocks and the terminal average pool layer.  Using these representations, we computed the manifold capacity, radius, dimension, and center correlations across the layers of each DNN with MCT.  We also analyzed on macaque brain recordings (dashed lines, error bars from bootstrap).}
    \label{fig:MCT main}
\end{figure}

\vspace{\preparasep}
\paragraph{Supervised models achieve higher classification capacity by shrinking their neural manifolds.}
Our first step in investigating the organizational structure of object manifolds in deep neural networks was to probe the neural population geometry of object manifolds with MCT. 
As in Section~\ref{sec:geometric toolkit}, we expect the neural manifolds to ``shrink'', and this intuition is quantitatively characterized by the results of measurements from MCT as shown in Figure~\ref{fig:MCT main}, aggregated by model type.

MCT reveals differences in the representations learned by supervised models and unsupervised models.  %
Specifically, in the later layers of the DNNs we consistently see that the objects manifolds of unsupervised models are larger in both radius and dimension, and exhibit a lower manifold capacity than the supervised models. Interestingly, the center correlation is similar between both types of models across the network layers.

The trends in manifold capacity, radius, and dimension together indicate that supervised and unsupervised models utilize different strategies to organize their representations. Supervised models learn representations with high capacity by shrinking the object manifolds in their representational space, while unsupervised models learn object manifold representations that are larger and less compressed.  With regards to the organizational hypotheses established in the introduction, these results suggest that supervised DNNs exhibit a higher degree of category-performance driven organization than unsupervised models do. These differences raise interesting questions. What are the advantages and disadvantages of each organizational structure?  Are supervised models learning representations that are ``overspecialized'' to object classification?

\vspace{\preparasep}
\paragraph{Trends across macaque ventral visual stream match trends across model layers.}
We repeated the MCT analysis described above for V4 and IT macaque neural recordings from to the same sets of object images (Figure~\ref{fig:MCT main}). We see that the trend in MCT metrics across the visual cortex matches the trends we see across DNN layers (manifold capacity and center correlation increasing, manifold radius and dimension decreasing). This suggests that the high level organizational strategy of DNNs and the brains may be similar. Furthermore, we see that the terminal values of capacity, radius, and dimension in the visual cortex (IT) are closer to the terminal values of the unsupervised DNNs compared to supervised DNNs (average pool layer). This suggests that the brain, like unsupervised models, does not compress object manifolds to the extent that supervised models do.

\subsection{Manifold alignment analysis and the geometry of task-dependent manifolds}\label{sec:MAA details}
\vspace{-6mm}
\begin{figure}[ht]
    \centering
    \subfigure[Manifold alignment analysis]{\includegraphics[width=0.23\textwidth]{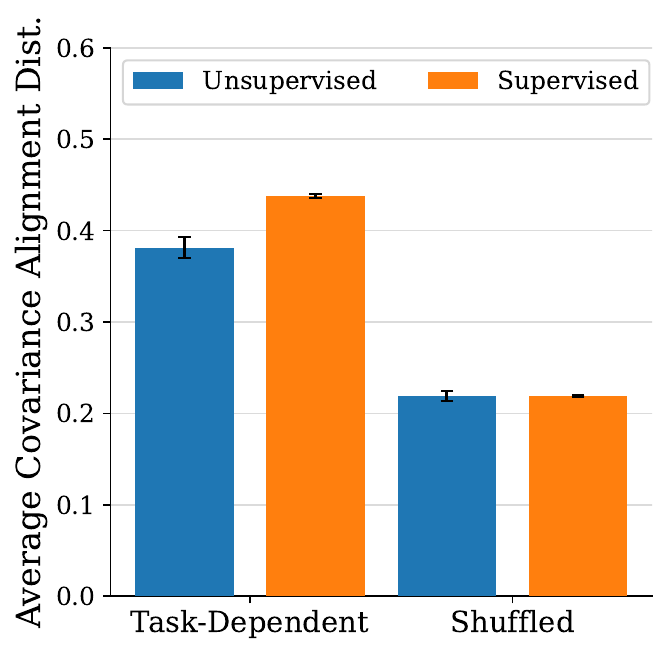}}
    \hspace{0.1\textwidth}
    \subfigure[Task-dependent]
    {\includegraphics[width=0.18\textwidth]{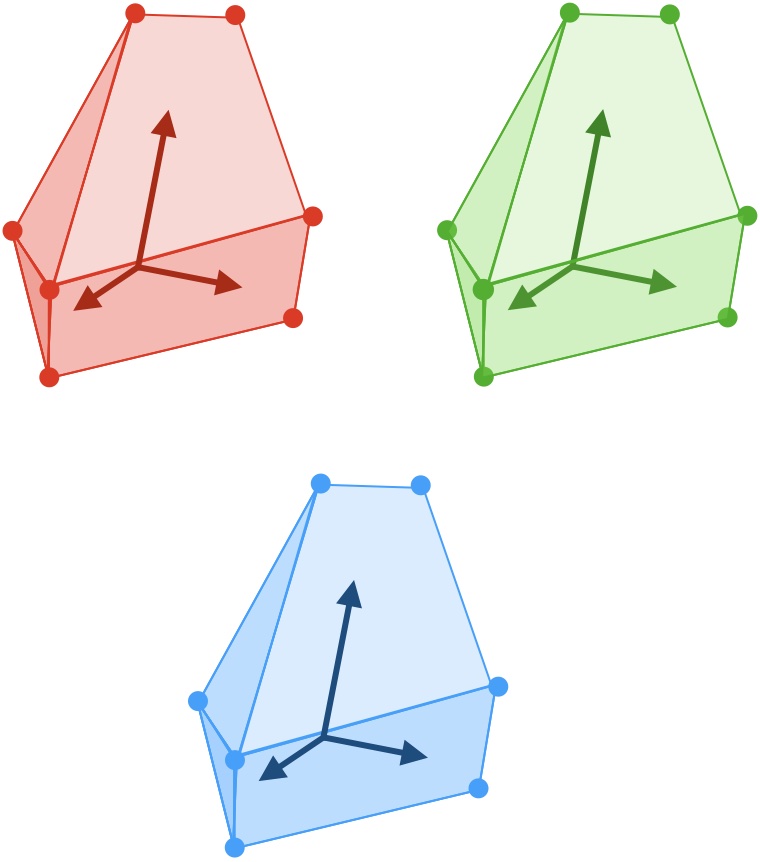}}
    \hspace{0.1\textwidth}
    \subfigure[Shuffled]{\includegraphics[width=0.17\textwidth]{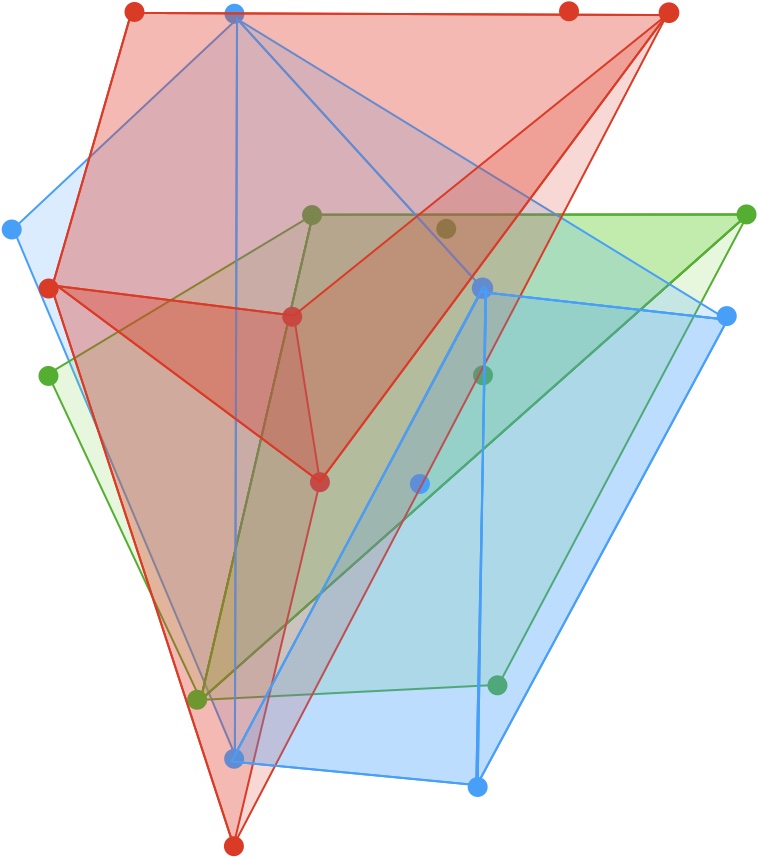}}
    \caption{(a) Unsupervised DNNs display greater manifold alignment than supervised DNNs at the terminal average pool layer.  For each model, we averaged the Bures covariance distance between the 8 super categories in our dataset to quantify how aligned the object manifolds were in each model. We also repeat the experiment, but with shuffled manifolds (random partitioning).  Measurements on shuffled manifolds are unable to distinguish between unsupervised and supervised models.  (b-c) Task-dependent manifolds can capture meaningful structure in the representational space that is lost in shuffled (random) manifolds without task-dependence.}
    \label{fig:manifold alignment main}
\end{figure}
\vspace{-3mm}

\paragraph{The representations in unsupervised models are more driven by latent organization.}
The fact that unsupervised models have larger object manifolds in terms of radius and dimension raised the interesting question of how these models can achieve good classification performance despite their larger manifolds. We realize that one way this could be archived is through aligning the manifolds in order to use the representational space more efficiently, as shown by Wakhloo et al. \cite{wakhloo2022linear}. Thus, we decide to use MAA to investigate how aligned the object manifolds were in the representational space in order to put the MCT measurements into context. The results are shown in Figure~\ref{fig:manifold alignment main}(a), aggregated by model type (supervised and unsupervised). In Figure~\ref{fig:manifold alignment main}(b), we see results from the same experiment, but with examples randomly shuffled between manifolds. In this scenario, we are unable to see significant differences between supervised and unsupervised DNNs, demonstrating the importance of task-relevant partitions when studying these geometrical properties.

The covariance distance and signal mismatch distance measurements indicate that the representations of unsupervised models are more aligned than those of supervised models. With regards to the organizational hypotheses established in the introduction, these results suggest that unsupervised DNNs exhibit a higher degree of latent organization driven structure than supervised models do.  Could these more structured representations provide advantages?

\subsection{Relative manifold position and alignment analysis}\label{sec:relative position}

Inspired by the discoveries in Section~\ref{sec:MCT details} %
, we next explore quantification of the similarity between neural activity in higher-level areas of macaque visual cortex (V4 and IT) and unsupervised vs. supervised DNNs by computing RSA-style comparisons. However, instead of using pairwise distances between individual exemplar responses, we use pairwise distances computed based on task-dependent elements (category manifold centers or category manifold covariances). We find that the geometrical properties of macaque neural recordings are consistently more similar to unsupervised DNNs than supervised DNNs.  However, note that these results don't show that the brain utilizes unsupervised learning objectives.  Rather, these experiments show that the organizations of representations in unsupervised DNNs more resemble those in the brain by these metrics.
\vspace{\preparasep}
\paragraph{Unsupervised models position object manifolds more similarly to macaque visual cortex.}
We begin by investigating the similarity between the relative positions of the object manifolds in the macaque visual cortex and in the DNNs. For the neural responses in visual cortex and each of the DNNs, we compute the centers of each of the object manifolds. Using these center locations, we compute a center correlation matrix. We correlate the off-diagonal entries of the matrices for the biological areas with those of the DNNs to measure similarity in relative object manifold positions. The results displayed in Figure~\ref{fig:neural_correlation_results}(a) show that the representations of unsupervised DNNs are more similar to the macaque visual cortex in terms of object manifold positions. Furthermore, the similarity between the IT and DNN representations increases across the layers of the DNNs.
\vspace{\preparasep}
\paragraph{Unsupervised models orient category manifolds more similarly to macaque visual cortex.}
We next use a similar approach to explore a different attribute of the object manifolds: their orientation.  Specifically, we compute the pairwise alignment between object manifolds (as described above) for the visual cortex and each of the DNNs to measure the orientations of the manifolds relative to each other. We then correlate the off-diagonal entries of these matrices to measure similarity in pairwise manifold alignment between the visual cortex recordings and DNNs. The results are displayed in Figure~\ref{fig:neural_correlation_results}(b). Again, we find that the representations in the macaque visual cortex are more similar to unsupervised DNNs. These results can be interpreted as follows: two manifolds that are more aligned in the macaque visual cortex are also more aligned in the unsupervised DNNs.
\vspace{\preparasep}
\paragraph{RSA is unable to distinguish different types of models.}
Previously, RSA has been employed as a method to compare the similarity of DNN representations and neural recordings. In Figure~\ref{fig:neural_correlation_results}(c), we demonstrate that merely performing RSA on pairwise distances, without any task-dependence, results in similar neural similarity scores for both supervised and unsupervised DNNs. This is unlike the outcomes observed with our proposed methods. The results underscore the importance of studying task-relevant manifolds, as opposed to responses to individual stimuli.

\begin{figure}[ht]
    \centering
    \subfigure[Position similarity]{\includegraphics[width=0.25\textwidth]{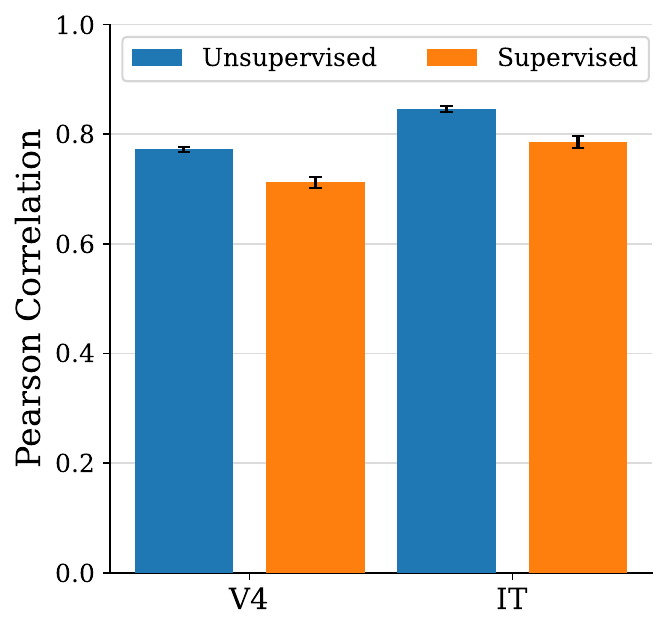}}
    \subfigure[Alignment similarity]{\includegraphics[width=0.25\textwidth]{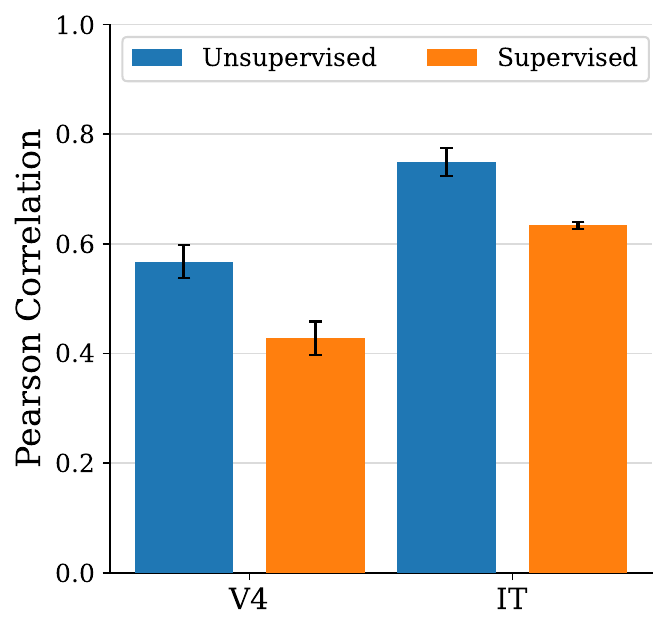}}
    \subfigure[RSA]{\includegraphics[width=0.25\textwidth]{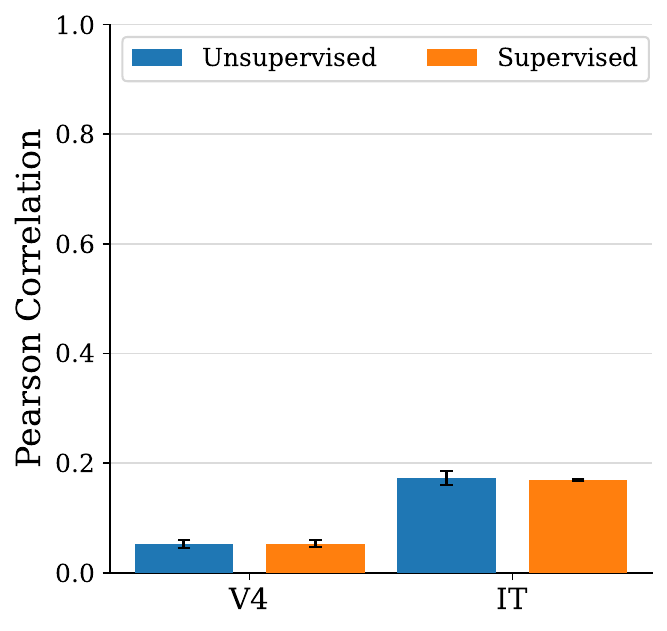}}
    \caption{(a) The relative positions of neural object manifolds are more similar to unsupervised DNNs than supervised DNNs.  Higher correlation means greater similarity to macaque neural recordings.  (b) The relative orientations of neural object manifolds are more similar to unsupervised DNNs than supervised DNNs. (c) RSA on stimuli pairwise-distances (no task-dependence) is unable to separate the supervised and unsupervised DNN types.}
    
    \label{fig:neural_correlation_results}
\end{figure}

\subsection{Manifold alignment and regression performance}\label{sec:alignment performance}
To investigate potential benefits of more structured representations, we next seek to examine the relationship between object manifold alignment and regression performance. To achieve this, we generate object images that varied only in one of the six viewing parameters (described in Section~\ref{models and datasets}) to eliminate any confounding variance, and yielded latent variation manifolds (around each exemplars). Then, we utilize several quantitative measures to perform a correlation analysis to probe the potential connection between computation and geometry in the context of regression. %

For each DNN, we calculate four quantities for each pair of object manifolds (average pool layer): (i) manifold covariance distance, which measures the degree of alignment of the covariance structures of the two manifolds (Section~\ref{sec:geometric toolkit}); (ii) cross-regression performance, defined as the $R^2$ achieved from training a linear regression model on one latent variation manifold around an exemplar and evaluating on another; (iii) regression performance mismatch, defined as the difference between the least squares error of the union of the two manifolds and the average of the least squares error of each manifold; and (iv) signal mismatch distance, a new distance measure we propose to quantify the regression performance (Section~\ref{sec:geometric toolkit}). While (i, ii) capture the geometry and regression performance of the manifolds respectively, (iii) and (iv) serve as quantitative measures bridging these two aspects.

We hypothesize that the more geometrically similar the two manifolds are, the better the regression performance will be. This high-level hypothesis leads to four quantitative predictions based on the aforementioned measures, as summarized in Figure~\ref{fig:alignment_vs_regression}. We found affirmative results for three out of four predictions while the failure of covariance distance positively correlating with regression performance mismatch suggests that the signal mismatch distance unveils additional computational properties of the manifolds. These findings advocate that manifold geometry may be associated with regression performance and warrant future studies.

\vspace{-2mm}
\begin{figure}[ht]
    \centering    
    {\includegraphics[width=0.92\textwidth]{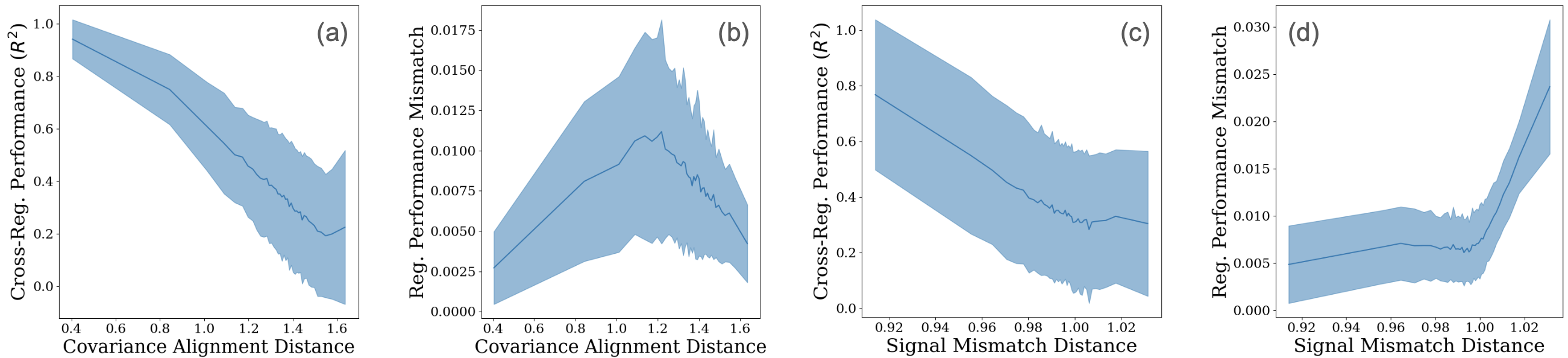}}
    \caption{We had predicted that both the manifold covariance distance and the signal mismatch distance would negatively correlate with the cross-manifold regression performance and positively correlate with the regression performance mismatch. It turns out that, except for the manifold covariance distance, which does not positively correlate with the regression performance mismatch (i.e., (b)), all the other predictions (i.e., (a), (c), and (d)) are supported by the results from analyzing the terminal average pool output of each DNN. We interpret the failure of prediction (b) as suggesting the signal mismatch distance could provide extra computational information.}
    \label{fig:alignment_vs_regression}
\end{figure}
\vspace*{-6mm}

\section{Related Work}
\vspace{\preparasep}
\paragraph{Geometric analysis in neural networks.}
Previous work has utilized geometrical properties to investigate the internal mechanisms of DNNs and explain computational performance.  Ansuini et al.~\cite{NEURIPS2019_IntrinsicDimension} show that lower intrinsic dimension of object manifolds is associated with increased generalization performance. Cohen et al.~\cite{Cohen2020} use MCT to study the way neural geometry and classification capacity change across DNN layers, and how different layers of the network can influence these properties.  Yerxa et. al~\cite{yerxa2023learning} show that training DNNs to optimize these geometrical properties can result in good classification performance. These results demonstrate the power of using geometrical tools to explain computational properties of neural networks.

\vspace{\preparasep}
\paragraph{Comparing biological and artificial neural networks.}
Another area of previous work has been in comparing DNNs to neural recordings to uncover which learning algorithms are most like the brain. Brain-Score~\cite{schrimpf2018brain} quantifies similarity to the brain based on PLS regression scores. DNNs are scored by their ability to predict biological neural responses to the same stimulti. Previous work~\cite{chengxu} has used these metrics to compare DNNs to brain recordings, but these metrics are unable to provide information about the organization of representations, which we believe to be crucial to understanding underlying mechanisms. Another method that has been used to compare computational models and biological neural responses is representational similarity analysis, RSA~\cite{RSA}, which compares representations by computing correlations between "dissimilarity" matrices. However, recent studies have shown that these types of similarity measures have potential issues \cite{han2023identification}. Thus, further theoretical justifications are needed to show how these metrics could bridge the representations to computational properties of interest, e.g., classification and regression performance.

\vspace{\preparasep}
\paragraph{Training objectives.}
Previous works have examined why certain training objectives yield better performance or generalizability, using both theoretical and empirical approaches~\cite{kornblith2019better,neyshabur2020being,kornblith2021better,arora2019theoretical,saunshi2020mathematical,ge2023provable}. Kornblith et al.~\cite{kornblith2021better} investigated the connection between loss functions, regularizers, and test accuracy in image classification tasks,
observing that better class separability is associated with representations that are less transferable. While these studies focus more on end-performance and/or geometry in parameter space, our work complements the landscape through a geometric analysis of object representations.

\vspace{\preparasep}
\paragraph{Disentangled representations.}
Disentanglement, sometimes known as factorization, is a central theme in the study of neural representations. There have been efforts from both machine learning~\cite{higgins2017beta,eastwood2018framework,horan2021unsupervised} and neuroscience perspectives~\cite{whittington2023disentanglement,johnston2023abstract} to understand and quantify the role of disentangled (or factorized) representations. For example, $\beta$-VAEs~\cite{higgins2017beta} and their variants utilize information theory to encourage factorized latent distributions. Eastwood et al.~\cite{eastwood2018framework} propose a quantitative framework for gauging the degree of disentanglement when the ground-truth latent structure is accessible. On the biological front, Whittington et al.~\cite{whittington2023disentanglement} explore disentanglement through biological and normative constraints. As discussed by Locatello~\cite{locatello2019challenging}, the study of disentanglement requires insights into the inductive biases of the data, tasks, and models. While this work does not directly tackle the problem of disentanglement, we believe geometric approaches would lead to insightful understandings.

\vspace{-2mm}
\section{Conclusion and Discussion}
\vspace{-2mm}
In this study, we utilize Manifold Capacity Theory (MCT) and Manifold Alignment Analysis (MAA) to explore the neural population geometry of artificial and biological neural networks. Unlike traditional comparisons of different DNNs, which primarily focus on end performance rather than internal mechanisms, we demonstrate that geometric analyses at the representation level can reveal differences in the organizational strategies of DNNs with distinct objectives (supervised and unsupervised DNNs). Our findings indicate that these geometric properties are associated with computational properties of interest, such as regression performance.

While we utilize several quantitative tools to probe the organizational strategies of neural networks, some of the underlying mechanisms remain open. MCT currently focuses on the geometric properties of a single manifold and hence we develop MAA to complement the aspect of the relations across manifolds. Unlike MCT, which directly connects general geometrical properties to the efficiency of downstream computations, MAA addresses only a fraction of the picture. Nevertheless, these limitations open up numerous research questions for future work as described below. 

On the theoretical side, a natural follow-up direction is to further develop rigorous connections between geometry and regression performance. In this paper, we formally motivate alignment measurements in local settings using signal mismatch distance, and we observe a striking association between alignment and regression performance in real data. A rigorous theory applicable to global manifolds linking geometry to regression performance would open many new doors for innovation in model development and understanding, just as MCT did with classification~\cite{yerxa2023learning}.
On the machine learning side, we show manifold alignment is associated with regression performance; are there other benefits? As mentioned before, manifold alignment has been connected with task-relevant properties such as capacity~\cite{wakhloo2022linear}. Is it possible that increased manifold alignment can explain other phenomena of unsupervised DNNs, such as improved performance in certain areas and the effectiveness of pretraining? A deeper understanding of how these geometrical properties influence DNN performance would be a significant asset for understanding why DNNs work. On the biological side, these findings also present new avenues for the neuroscience community to study the differences between how DNNs and the brain learn through geometrical perspectives. For exmaple, can we extend these metrics to reveal similarities and differences between different DNNs and the brain?

\clearpage 

\section*{Acknowledgments}
We thank Uri Cohen for helpful discussions. This work was funded by the Center for Computational Neuroscience at the Flatiron Institute of the Simons Foundation and the Klingenstein-Simons Award to S.C. H.S. is supported by the Gatsby Charitable Foundation, the Swartz Foundation, and ONR grant No.N0014- 23-1-2051. All experiments were performed on the Flatiron Institute high-performance computing cluster, and the MIT OpenMind computing cluster. 

\bibliography{mybib}

\newpage
\appendix

\section{The Quantitative Measures Used in this Paper}

\begin{table}[ht]
\centering
\begin{tabular}{|c|l|l|l|}
\hline
\multicolumn{1}{|l|}{} & \multicolumn{1}{c|}{Quantitative Measures} & \multicolumn{1}{c|}{Properties} & \multicolumn{1}{c|}{Relevant Sections}                                                                                                                                                                                   \\ \hline
\multirow{4}{*}{MCT}   & Classification capacity $\alpha_M$                        & Classification                   & Section~\ref{sec:geometric toolkit},~\ref{sec:MCT details}                                                                                                           \\ \cline{2-4} 
                       & Anchor radius $R_M$                               & Geometry + Classification        & Section~\ref{sec:geometric toolkit},~\ref{sec:MCT details}                                                                                                           \\ \cline{2-4} 
                       & Anchor dimension $D_M$                            & Geometry + Classification        & Section~\ref{sec:geometric toolkit},~\ref{sec:MCT details}                                                                                                           \\ \cline{2-4} 
                       & Center correlation $\rho_{center}$         & Geometry                         & Section~\ref{sec:geometric toolkit},~\ref{sec:MCT details}                                                                                                           \\ \hline
\multirow{4}{*}{MAA}   & Covariance alignment distance       & Geometry + Regression*             & Section~\ref{sec:geometric toolkit},~\ref{sec:MAA details},~\ref{sec:alignment performance}                                            \\ \cline{2-4} 
                       & PCA distance                               & Geometry + Regression*            & \begin{tabular}[c]{@{}l@{}}Section~\ref{sec:geometric toolkit},~\ref{sec:MAA details},~\ref{sec:alignment performance}\end{tabular} \\ \cline{2-4} 
                       & Signal mismatch distance                   & Geometry + Regression            & \begin{tabular}[c]{@{}l@{}}Section~\ref{sec:geometric toolkit},~\ref{sec:MAA details},~\ref{sec:alignment performance}\end{tabular} \\ \cline{2-4} 
                       & Relative geometry analysis   & Geometry              & Section~\ref{sec:relative position}                                                                                                                                                                \\ \hline
\end{tabular}
\vspace{2mm}
\caption{A summary of the quantitative measures used in this paper. ``Regression*'' refers to the measure being correlated with regression performance empirically.}
\end{table}

\section{Technical Details}

\subsection{MCT} \label{MCT_appendix}
Let $M$ be an object manifold. $M$ could be empirically estimated from a neural network or a purely analytical model. The classification capacity of $M$ is defined as follows. Let $N$ be the number of neurons. For each positive integer $P$, we randomly pick the centers and coordinates for $P$ independent copies of manifold $M$ in $\mathbb{R}^N$, denoted as $M^\mu$ for $\mu=1,2,\dots, P$. Next, we randomly assign $\pm1$ label to each copy independently and denote it as $y^\mu$. Intuitively, the higher the $P$ is, the less easy to have a linear classifier to separate all these manifolds according to their labels. Concretely, the probability (over the randomness of picking centers, coordinates, and labels for each manifold) of separability has a sharp phase transition with respect to $P$ and the classification capacity is hence defined as the ration $\alpha_M=P/N$ where $P$ is the phase transition point.

In~\cite{PhysRevX.8.031003}, Chung, Lee, and Sompolinsky used techniques from replica theory to derive an analytical expression for the capacity $\alpha_M$ in terms of the geometric properties of the manifold $M$. In particular, they showed that $\alpha_M\approx(1+R_M^{-2})/D_M$ where $R_M$ and $D_M$ are anchor radius and anchor dimension of $M$. That is, the classification capacity and these quantities from anchor geometry serve as a bridge between the computational and representational aspect of neural manifolds. See~\cite{PhysRevX.8.031003} for more details.

\subsection{Covariance alignment distance as a measure of alignment} \label{wasserstein_appendix}
We propose measuring the alignment of covariances using a modified version of the traditional Bures metric \cite{bhatia2019bures}, which has been shown to be a robust way to compare positive-semidefinite matrices. More importantly, it has been shown that the Bures metric can be used to compare the geometry of covariances of neural populations \cite{duong2022representational}. 

In it's general form, the Bures metric between two covariances $A$ and $B$ can be defined as
\begin{equation}
    d(A, B) = Bures(A, B)  = Procrustes(A^{1/2}, B^{1/2})  =  \sqrt{Tr(A) +  Tr(B) - 2 \|A^{1/2}B^{1/2}\|_{*}}
\end{equation}
where $\| . \|_{*}$ denotes the nuclear norm.

This metric; however, generally compares the shapes of two covariances according to a combination of their alignment in orientation (shape) \textit{and} magnitude of variability (size). For the purposes of this work, we specifically want to understand the component of covariance similarity due to shape alignment, irrespective of size. This is because individual object manifolds may represent latent factors with different degrees of variability, but we would like to understand if in a high-dimensional space, two representations are projecting latent factors into aligned axes of variability irrespective of the size. There are many ways to normalize by magnitude of individual covariances to remove the "size" component. Specifically, we propose evaluating the Bures metric over trace-normalized covariances and show that this distance is scale-invariant as desired: 
\begin{equation}
    \begin{split}
        d'(A, B) & = d\left(\frac{A}{Tr(A)}, \frac{B}{Tr(B)}\right) \\
        & = \sqrt{2 - \frac{2 \| A^{1/2}B^{1/2} \|_{*}}{ (Tr(A^{1/2}) ~ Tr(B^{1/2}))}}
    \end{split}
\end{equation}

\begin{proposition}\label{porp:covdist}
 For any two positive definite matrices $A$ and $B$:  
 $d'(sA, B) = d'(A, B)$for an arbitrary scalar s.
 \end{proposition}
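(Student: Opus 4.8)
The plan is to prove the claim directly from the definition of $d'$, which is built by first trace-normalizing each argument and then applying the ordinary Bures metric $d$. The central observation is that trace normalization already annihilates any positive scalar factor, so the substitution $A \mapsto sA$ has no effect once the argument is pushed through the normalization map $X \mapsto X/Tr(X)$. Everything then reduces to the homogeneity of the trace, and no genuine estimate is required.

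Concretely, first I would record the (implicit) hypothesis that $s>0$, which is needed for $sA$ to remain positive definite and hence a valid argument of $d'$. Under this hypothesis the trace is homogeneous of degree one, $Tr(sA) = s\,Tr(A)$. Second, I would compute the normalized version of $sA$ and watch the scalar cancel:
\[
\frac{sA}{Tr(sA)} = \frac{sA}{s\,Tr(A)} = \frac{A}{Tr(A)},
\]
so the first argument handed to the underlying metric $d$ is literally unchanged by the scaling, while the second argument $B/Tr(B)$ never involved $s$ at all.

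Third, I would substitute these into the definition $d'(X,Y) = d\!\left(X/Tr(X),\, Y/Tr(Y)\right)$ to conclude
\[
d'(sA,B) = d\left(\frac{sA}{Tr(sA)}, \frac{B}{Tr(B)}\right) = d\left(\frac{A}{Tr(A)}, \frac{B}{Tr(B)}\right) = d'(A,B),
\]
which is exactly the assertion. Equivalently, one can read off the same cancellation from the closed-form expression for $d'$: replacing $A$ by $sA$ multiplies $\|(sA)^{1/2}B^{1/2}\|_*$ by $\sqrt{s}$ and multiplies the normalizing denominator by the same factor $\sqrt{s}$, so the ratio is invariant.

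There is essentially no hard step here, and I would present it as such rather than dressing it up: the result is immediate once one notices the scale invariance of the normalization map. The only point requiring care is the domain restriction $s>0$ (the statement's \emph{arbitrary scalar} should be read as \emph{arbitrary positive scalar}), since a non-positive $s$ would take $sA$ outside the positive-definite cone on which $d'$ and the matrix square root are defined. I would flag this restriction explicitly as the one subtlety, rather than treat it as routine.
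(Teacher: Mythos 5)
Your proof is correct and takes essentially the same route as the paper's: both hinge on the scalar cancelling under trace normalization, with the paper carrying out the cancellation directly in the closed-form expression (the $s^{1/2}$ factors in numerator and denominator), which is exactly the ``equivalently'' variant you mention. Your explicit observation that $s$ must be positive for $sA$ to stay in the positive-definite cone is a legitimate refinement of the statement's ``arbitrary scalar'' that the paper's proof silently assumes.
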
.
\begin{proof}
\mbox{}
\begin{equation}
\begin{split}
        d'(sA, B) & = \sqrt{2 - \frac{2 \| (sA)^{1/2}B^{1/2} \|_{*}}{ (Tr(sA) ~ Tr(B))}} \\
        & = \sqrt{2 - \frac{2 s^{1/2} \|A^{1/2}B^{1/2}\|_{*}}{
        (s^{1/2}Tr(A^{1/2}) ~ Tr(B^{1/2}))}} \\
        & = \sqrt{2 - \frac{2 \| A^{1/2}B^{1/2} \|_{*}}{ (Tr(A^{1/2}) ~ Tr(B^{1/2}))}} \\
        & = d'(A, B)
\end{split}
\end{equation}
\end{proof}

\subsection{Signal mismatch distance}\label{app:smd}
In this subsection, we start with motivating the definition of signal mismatch distance by analytically deriving the regression performance mismatch, i.e., the difference between the least squares error of the joint manifold and the average of the least squares error of the individual manifolds. Next, we examine the mathematical properties of the signal mismatch distance and relevant physical interpretations. Finally, we consider a generative model for investigating the properties of the signal mismatch distance through numerical simulations.

\subsubsection{Regression performance mismatch}
Recall that we model a manifold $M$ as $\{(x^M(\theta,\psi),\theta)\}$ where $x^M(\theta,\psi)$ is a neural representation parameterized by feature value $\theta$ and in-manifold variability $\psi$. The least squares error of a regressor $w$ is $\langle(w^\top x^M(\theta,\psi) - \theta)^2\rangle_M$ and the first order derivative with respect to $w$ is
\[
2\langle x^M(\theta,\psi)(x^M(\theta,\psi))^\top w - \theta\cdot x^M(\theta,\psi)\rangle_M = 2C^Mw-2\bar{x}^M
\]
where $\bar{x}^M=\langle[\theta\cdot x^M(\theta,\psi)\rangle_M$ and $C^M = \langle x^M(\theta,\psi)(x^M(\theta,\psi))^\top\rangle_M$ and $\langle\cdot\rangle_M$ refers to taking average over the manifold $M$. Namely, the optimal linear regressor of $M$ (without the biased term, or equivalently, being mean centered) is
\[
w^M = (C^M)^{-1}\bar{x}^M 
\]
and the least squares error is
\[
LSE^{M} = \langle\theta^2\rangle_M-(\bar{x}^M)^\top(C^M)^{-1}(\bar{x}^M) \, .
\]
Now, we consider three settings: manifold $A$, manifold $B$, and manifold $A\cup B$. The least squares errors are the following respectively.
\begin{align*}
LSE^{A} &= \langle\theta^2\rangle_A - (\bar{x}^A)^\top(C^A)^{-1}(\bar{x}^A)\\
LSE^{B} &= \langle\theta^2\rangle_B - (\bar{x}^B)^\top(C^B)^{-1}(\bar{x}^B)\\
LSE^{A\cup B} &= \langle\theta^2\rangle_A/2 +\langle\theta^2\rangle_B/2 - (\bar{x}^A+\bar{x}^B)^\top(C^A+C^B)^{-1}(\bar{x}^A+\bar{x}^B)/2
\end{align*}
where the last equation holds due to $\bar{x}^M = \langle\theta\cdot x(\theta,\psi)\rangle_{A\cup B}=\langle\theta\cdot x(\theta,\psi)\rangle_{A}/2+\langle\theta\cdot x(\theta,\psi)\rangle_{B}/2=\bar{x}^A/2+\bar{x}^B/2$ and $\langle x(\theta,\psi)(x(\theta,\psi))^\top\rangle_{A\cup B}=\langle x(\theta,\psi)(x(\theta,\psi))^\top\rangle_{A}/2+\langle x(\theta,\psi)(x(\theta,\psi))^\top\rangle_{B}/2=C^A/2+C^B/2$.
Thus, we arrive~\autoref{eq:smd 1},~\autoref{eq:smd 2}, and~\autoref{eq:smd 3} as presented in Section~\ref{sec:geometric toolkit}.

\subsubsection{Properties of the (normalized) signal mismatch distance}
As discussed in Section~\ref{sec:geometric toolkit}, the first term (i.e.,~\autoref{eq:smd 1}) in the regression performance mismatch (i.e., $LSE^A/2+LSE^B/2-LSE^{A\cup B}$) motivates the definition of the signal mismatch distance. In the main text, we focus on the normalized signal mismatch distance defined as
\[
d_{\text{signal mismatch}}(A,B) = \frac{(\bar{x}^A-\bar{x}^B)^\top(C^A+C^B)^{-1}(\bar{x}^A-\bar{x}^B)}{(\bar{x}^A)^\top(C^A+C^B)^{-1}(\bar{x}^A)+(\bar{x}^B)^\top(C^A+C^B)^{-1}(\bar{x}^B)} \, .
\]
As a remark, in this work we focus on the normalized signal mismatch distance because it nicely serves as a certain ``signal-to-correlation ratio'' as justified later.

Let us start with enumerating some basic properties of the (normalized) signal mismatch distance.
\begin{proposition}\label{porp:smd}
For every manifold $A$ and $B$, the following properties hold.
\begin{enumerate}
\item $0\leq d_{\text{signal mismatch}}(A,B)\leq 2$.
\item $d_{\text{signal mismatch}}(A,A)=0$.
\item $d_{\text{signal mismatch}}(A,-A)=2$.
\item $d_{\text{signal mismatch}}(A,B)=d_{\text{signal mismatch}}(cA,cB)$ for every $c\in\mathbb{R}$ where $cM = \{(c\cdot x^M(\theta,\psi),\theta)\}$. 
\item $d_{\text{signal mismatch}}(A,B)=d_{\text{signal mismatch}}(A(c),B(c))$ for every $c\in\mathbb{R}$ where $M(c) = \{(x^M(\theta,\psi),c\cdot\theta)\}$.
\item If manifold $A$ and $B$ are independent, i.e., the subspace spanned by $\{x^A\}$ and the subspace spanned by $\{x^B\}$ only intersect at the origin, then $d_{\text{signal mismatch}}(A,B)=1$.
\end{enumerate}
\end{proposition}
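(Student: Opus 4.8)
The plan is to prove the six properties of the normalized signal mismatch distance
\[
d_{\text{signal mismatch}}(A,B) = \frac{(\bar{x}^A-\bar{x}^B)^\top(C^A+C^B)^{-1}(\bar{x}^A-\bar{x}^B)}{(\bar{x}^A)^\top(C^A+C^B)^{-1}(\bar{x}^A)+(\bar{x}^B)^\top(C^A+C^B)^{-1}(\bar{x}^B)}
\]
one at a time, exploiting the fact that $(C^A+C^B)^{-1}$ is positive definite, so the quadratic form $\langle u,v\rangle_S = u^\top S v$ with $S=(C^A+C^B)^{-1}$ defines an inner product. The first step is to set $u=\bar{x}^A$ and $v=\bar{x}^B$ and recognize that the numerator is $\|u-v\|_S^2$ while the denominator is $\|u\|_S^2+\|v\|_S^2$; phrasing everything in terms of this inner product is what makes the rest routine.

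With that reframing, properties (1)--(3) follow from elementary inner-product algebra. Expanding $\|u-v\|_S^2 = \|u\|_S^2 - 2\langle u,v\rangle_S + \|v\|_S^2$ and dividing by $\|u\|_S^2+\|v\|_S^2$ gives
\[
d_{\text{signal mismatch}}(A,B) = 1 - \frac{2\langle u,v\rangle_S}{\|u\|_S^2+\|v\|_S^2},
\]
and since the AM--GM-type bound $|\langle u,v\rangle_S|\le \|u\|_S\|v\|_S \le \tfrac12(\|u\|_S^2+\|v\|_S^2)$ holds by Cauchy--Schwarz, the fraction lies in $[-1,1]$, proving (1). Setting $B=A$ forces $u=v$, numerator $=0$, giving (2); and for $A$ and $-A$ one has $\bar{x}^{-A}=-\bar{x}^A$ but $C^{-A}=C^A$ (the covariance is insensitive to sign), so $u=\bar x^A$, $v=-\bar x^A$, the numerator becomes $\|2u\|_S^2=4\|u\|_S^2$ and the denominator $2\|u\|_S^2$, yielding $2$, which is (3).

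The scale invariances (4) and (5) are verified by tracking how $\bar{x}^M$ and $C^M$ transform under the stated reparametrizations. For (4), replacing $x^M$ by $c\,x^M$ scales $\bar{x}^M\mapsto c\,\bar{x}^M$ and $C^M\mapsto c^2 C^M$, so $S\mapsto c^{-2}S$; each quadratic form $\bar{x}^\top S\bar{x}$ then picks up a factor $c^2\cdot c^{-2}=1$ in both numerator and denominator, so the ratio is unchanged. For (5), replacing $\theta$ by $c\,\theta$ scales $\bar{x}^M=\langle\theta\, x^M\rangle_M \mapsto c\,\bar{x}^M$ while $C^M$ is unchanged, so every term of numerator and denominator scales by $c^2$ and cancels. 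The only subtlety here is confirming from the definitions that the sum covariance $C^{A\cup B}=C^A/2+C^B/2$ transformation behaves as claimed, which is already established in the appendix derivation I may cite.

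The main obstacle is property (6), the independence case. Here I expect the crux of the argument to be a block-diagonalization: if the subspaces spanned by $\{x^A\}$ and $\{x^B\}$ intersect only at the origin, then in a suitable basis $C^A$ and $C^B$ act on complementary (ideally orthogonal) subspaces, so $\bar{x}^A$ lives in $\ker C^B$'s complement and vice versa, making the cross terms $(\bar{x}^A)^\top(C^A+C^B)^{-1}(\bar{x}^B)$ vanish. Once the cross term is zero, the numerator $\|u\|_S^2 - 2\langle u,v\rangle_S + \|v\|_S^2$ collapses to $\|u\|_S^2+\|v\|_S^2$, which is exactly the denominator, giving $d=1$. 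The delicate point is that ``independent'' as defined only says the spans intersect trivially, not that they are orthogonal, so I would need to argue that $(C^A+C^B)^{-1}$ maps $\bar x^B$ back into the span of $\{x^B\}$ (using that $C^A$ annihilates that subspace or restricting attention to the relevant nondegenerate subspace), making $\langle \bar x^A, \bar x^B\rangle_S=0$; handling this cleanly, possibly via a direct-sum decomposition of the ambient space adapted to the two supports, is where the real work lies.
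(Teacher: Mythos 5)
Your treatment of properties (1)--(5) is correct and is essentially the paper's argument; writing everything in terms of the inner product $\langle u,v\rangle_S=u^\top(C^A+C^B)^{-1}v$ is only a cosmetic repackaging of the paper's expansion of $(\bar{x}^A\pm\bar{x}^B)^\top(C^A+C^B)^{-1}(\bar{x}^A\pm\bar{x}^B)$, and your bookkeeping for the scalings in (4) and (5) is right.

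For (6) you have correctly located the crux --- the cross term $(\bar{x}^A)^\top(C^A+C^B)^{-1}(\bar{x}^B)$ must vanish, and trivial intersection of the spans does \emph{not} mean the spans are orthogonal --- but you leave that step open, and the lemma you propose (that $(C^A+C^B)^{-1}$ maps $\bar{x}^B$ back into $\mathrm{span}\{x^B\}$) is not the right one. The correct statement is that it maps $\bar{x}^B$ into $V_A^{\perp}$, where $V_A=\mathrm{span}\{x^A\}$. Here is the closing argument: let $z=(C^A+C^B)^{-1}\bar{x}^B$ (pseudo-inverse on $V_A\oplus V_B$ if the sum is not all of $\mathbb{R}^n$). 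Then $C^Az=\bar{x}^B-C^Bz$; the left side lies in $V_A$ and the right side in $V_B$, so $C^Az\in V_A\cap V_B=\{0\}$. Since $C^A=\langle x^A(x^A)^\top\rangle$ is PSD with kernel exactly $V_A^{\perp}$, this forces $z\perp V_A$, hence $(\bar{x}^A)^\top z=0$ because $\bar{x}^A\in V_A$. With the cross term gone, the numerator equals the denominator and $d_{\text{signal mismatch}}(A,B)=1$. Note that this also quietly repairs the paper's own proof of (6), which invokes the identity $(C^A+C^B)^{-1}=(C^A)^{-1}+(C^B)^{-1}$; that identity is only literally true when $V_A\perp V_B$, whereas the three consequences the paper draws from it hold in general by the argument above.
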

\begin{proof}
\mbox{}
\begin{enumerate}
\item By the fact that $(C^A+C^B)$ is positive semidefinite, we know that the numerator and the two terms in the denominator of the normalized signal mismatch distance are non-negative. Next, expand , we have
\begin{align*}
&(\bar{x}^A\pm\bar{x}^B)^\top(C^A+C^B)^{-1}(\bar{x}^A\pm\bar{x}^B)\\
=\, &(\bar{x}^A)^\top(C^A+C^B)^{-1}(\bar{x}^A)+(\bar{x}^B)^\top(C^A+C^B)^{-1}(\bar{x}^B) \pm2(\bar{x}^A)^\top(C^A+C^B)^{-1}(\bar{x}^B) \, .
\end{align*}
By the non-negativity of the above equation(s), we have
\[
|2(\bar{x}^A)^\top(C^A+C^B)^{-1}(\bar{x}^B)| \leq (\bar{x}^A)^\top(C^A+C^B)^{-1}(\bar{x}^A)+(\bar{x}^B)^\top(C^A+C^B)^{-1}(\bar{x}^B) 
\]
Thus, we conclude that $0\leq(\bar{x}^A-\bar{x}^B)^\top(C^A+C^B)^{-1}(\bar{x}^A-\bar{x}^B)\leq2(\bar{x}^A)^\top(C^A+C^B)^{-1}(\bar{x}^A)+2(\bar{x}^B)^\top(C^A+C^B)^{-1}(\bar{x}^B)$ and hence the normalized signal mismatch distance takes value within $0$ and $2$.

\item As the numerator of the normalized signal mismatch distance is $0$, we conclude that the normalized signal mismatch distance between $A$ and itself is $0$.

\item As the numerator of the normalized signal mismatch distance is $4(\bar{x}^A)^\top(C^A+C^B)^{-1}(\bar{x}^A)+4(\bar{x}^B)^\top(C^A+C^B)^{-1}(\bar{x}^B$, we conclude that the normalized signal mismatch distance between $A$ and $-A$ is $2$.

\item As (uniformly) scaling the activity patterns by a factor of $c$ will incur a factor of $1/c^2$ in both the numerator and the denominator, the normalized signal mismatch distance won't change.

\item As (uniformly) scaling the stimulus value by a factor of $c$ will incur a factor of $c^2$ in both the numerator and the denominator, the normalized signal mismatch distance won't change.

\item Note that when the subspace spanned by $\{x^A\}$ and the subspace spanned by $\{x^B\}$ only intersect at the origin, we have $(C^A+C^B)^{-1} = (C^A)^{-1} + (C^B)^{-1}$. Thus, we have (i) $(\bar{x}^A-\bar{x}^B)^\top(C^A+C^B)^{-1}(\bar{x}^A-\bar{x}^B)=(\bar{x}^A)^\top(C^A)^{-1}(\bar{x}^A) + (\bar{x}^B)^\top(C^B)^{-1}(\bar{x}^B)$, (ii) $(\bar{x}^A)^\top(C^A+C^B)^{-1}(\bar{x}^A)=(\bar{x}^A)^\top(C^A)^{-1}(\bar{x}^A)$, and (iii) $(\bar{x}^B)^\top(C^A+C^B)^{-1}(\bar{x}^B)=(\bar{x}^B)^\top(C^B)^{-1}(\bar{x}^B)$. Putting everything together, we have that the signal mismatch distance between $A$ and $B$ is $1$.
\end{enumerate}
\end{proof}

The above proposition suggests that the (normalized) signal mismatch distance can be served as a measure ranging from $0$ to $2$ whereas (i) $0$ means the two manifolds are completely aligned; (ii) $1$ suggests the two manifolds are independent; (iii) $2$ means that the two manifolds are completely misaligned. To sum up, the lower the (normalized) signal mismatch distance, the more correlated the two manifolds in their organization to represent the latent feature.

\subsubsection{Sufficient conditions for the signal mismatch distance capturing regression performance mismatch}
Next, we would like to justify that the signal mismatch distance term (i.e.,~\autoref{eq:smd 1}) captures the regression performance mismatch when the manifolds satisfy certain sufficient conditions. 

\paragraph{Rotational symmetry.}
The simplest way to represent an object is using a low-dimensional ball/sphere. So here we study the signal mismatch distance in this toy setting as a sanity check. In fact, we consider a slightly more general scenario where a manifold enjoys rotational symmetry in the sense that the manifold is rotational symmetric in the subspace of active neurons. 

\begin{proposition}
Suppose both manifold $A$ and manifold $B$ have rotational symmetry and manifolds $A',B'$ have the same collections of activity patterns as manifolds $A,B$. If the stimulus representation is also rotational symmetric within each manifold, then
\begin{align*}
&\mathop{\mathbb{E}}_{(A,B)}\left[\left(\frac{LSE^{A}+LSE^{B}}{2}\right) - LSE^{A\cup B}-(\bar{x}^A-\bar{x}^B)^\top(C^A+C^B)^{-1}(\bar{x}^A-\bar{x}^B)\right]\\
=\, &\mathop{\mathbb{E}}_{(A',B')}\left[\left(\frac{LSE^{A'}+LSE^{B'}}{2}\right) - LSE^{A'\cup B'} - (\bar{x}^{A'}-\bar{x}^{B'})^\top(C^{A'}+C^{B'})^{-1}(\bar{x}^{A'}-\bar{x}^{B'})\right]
\end{align*}
where the randomness in the above equation is the direction of encoding the stimulus $\theta$.
\end{proposition}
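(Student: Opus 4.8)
The plan is to reduce both sides to expectations of quadratic and bilinear forms in the signal vectors $\bar{x}^A,\bar{x}^B$ with \emph{fixed} matrix coefficients, and then to show that averaging over the stimulus-encoding direction annihilates the bilinear (cross) terms while leaving the quadratic (self) terms invariant under the passage from $(A,B)$ to $(A',B')$.

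First I would record what does and does not depend on the encoding direction. Since $C^M=\langle x^M(x^M)^\top\rangle_M$ references only the activity patterns and not $\theta$, and since $A,B$ and $A',B'$ share the same collections of activity patterns, we have $C^A=C^{A'}$ and $C^B=C^{B'}$. Consequently every matrix appearing in the identity, namely $(C^A)^{-1}$, $(C^B)^{-1}$, and $(C^A+C^B)^{-1}$, is literally the same on both sides, and the only quantities carrying the randomness are $\bar{x}^A=\langle\theta\,x^A\rangle_A$ and $\bar{x}^B=\langle\theta\,x^B\rangle_B$.

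Next I would expand the bracketed residual. Using the three-line decomposition of the regression performance mismatch derived above, the residual equals the sum of the second and third lines minus the first line; collecting terms, it is a sum of a self term $(\bar{x}^A)^\top M_A\,\bar{x}^A$, a self term $(\bar{x}^B)^\top M_B\,\bar{x}^B$ with $M_A,M_B$ fixed combinations of $(C^A)^{-1},(C^B)^{-1},(C^A+C^B)^{-1}$, and a single cross term $(\bar{x}^A)^\top(C^A+C^B)^{-1}\bar{x}^B$. For the self terms I would invoke rotational symmetry: with $\theta$ encoded along a direction that is uniformly distributed in the active subspace, $\bar{x}^M$ is a scaled random direction whose second moment $\mathbb{E}[\bar{x}^M(\bar{x}^M)^\top]$ is a scalar multiple of the projector onto that subspace, the scalar being fixed by the activity patterns and $\langle\theta^2\rangle_M$. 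Because these data are shared between $A$ and $A'$ (and between $B$ and $B'$), each self-term expectation is identical on the two sides.

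The crux is the cross term, and this is where I expect the main difficulty. I would argue that the ``mild statistical assumptions'' amount to the two manifolds' encoding directions being chosen symmetrically and independently, so that the joint law of $(\bar{x}^A,\bar{x}^B)$ is invariant under the reflection $(\bar{x}^A,\bar{x}^B)\mapsto(-\bar{x}^A,\bar{x}^B)$. This reflection fixes both self terms, which are even in $\bar{x}^A$, but flips the sign of the cross term, forcing $\mathbb{E}[(\bar{x}^A)^\top(C^A+C^B)^{-1}\bar{x}^B]=0$; the identical vanishing holds for $(A',B')$. Combining the equality of the self-term expectations with the vanishing of both cross terms yields the claimed identity. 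The hard part will be pinning down ``rotational symmetry of the stimulus representation'' precisely enough to justify simultaneously the projector form of $\mathbb{E}[\bar{x}^M(\bar{x}^M)^\top]$ and the reflection invariance of the \emph{joint} encoding law, since the cross term only vanishes if the two encoding directions are genuinely decoupled (or at least conditionally sign-symmetric); without that decoupling the reflection need not preserve the joint distribution and the cross term could survive.
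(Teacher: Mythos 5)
Your treatment of the self terms is exactly the paper's argument: $C^A=C^{A'}$ and $C^B=C^{B'}$ because the covariances see only the (shared) activity patterns, the matrices $\hat C_1=(C^A)^{-1}-2(C^A+C^B)^{-1}$ and $\hat C_2=(C^B)^{-1}-2(C^A+C^B)^{-1}$ are therefore deterministic, and rotational symmetry of the per-manifold encoding makes $\mathbb{E}\big[(\bar x^A)^\top\hat C_1\bar x^A\big]$ depend only on data shared between $A$ and $A'$; your second-moment/projector computation is just a more explicit version of that step. Where you diverge is the cross term, and that divergence is an artifact of a normalization slip in the statement. The paper's proof contains no cross term: it takes the residual to be exactly $\tfrac12(\bar x^A)^\top\hat C_1\bar x^A+\tfrac12(\bar x^B)^\top\hat C_2\bar x^B$, which is what the decomposition in \autoref{eq:smd 1}--\autoref{eq:smd 3} gives if one subtracts $\tfrac12(\bar x^A-\bar x^B)^\top(C^A+C^B)^{-1}(\bar x^A-\bar x^B)$, i.e.\ the first line of that decomposition; the displayed proposition omits the factor $\tfrac12$, and by taking it literally you manufacture the surviving bilinear term $(\bar x^A)^\top(C^A+C^B)^{-1}\bar x^B$. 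You are right that this term is the dangerous one, but your proposed fix does not close the gap: reflection invariance of the \emph{joint} law of $(\bar x^A,\bar x^B)$ is not implied by the hypotheses (rotational symmetry is assumed only ``within each manifold''), and positing that the two encoding directions are decoupled would trivialize the proposition, whose entire content is that the residual is blind to how the two manifolds' latent encodings are correlated --- the cross term \emph{is} that correlation, so it cannot be averaged away; it must be absent from the residual by construction. Read the subtracted quantity as \autoref{eq:smd 1} (with its $\tfrac12$), drop the cross-term analysis, and your self-term argument already completes the proof along the paper's lines.
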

\begin{proof}
Let $S^A,S^B\subset[n]=\{1,\cdots,n\}$ be the set of active neurons of manifold $A$ and $B$ respectively where $n$ is the number of neurons. Let $n_A=|S^A|$, $n_B=|S^B|$, and $n_{A\cap B}=|S^A\cap S^B|$. In this case, both $C^A$ and $C^B$ are an identity matrix for a subspace in $\mathbb{R}^n$. Let us reindex the neurons so that $S^A = \{1,2,\dots,n_A\}$ and $S^B = \{n_A-n_{A\cap B}+1,\dots, n_A+n_B-n_{A\cap B}\}$ so that
\[
C^A = \begin{pmatrix}
    I & 0 & 0 & 0 \\
    0 & I & 0 & 0 \\
    0 & 0 & 0 & 0 \\
    0 & 0 & 0 & 0
\end{pmatrix} ,\ \ \
C^B = \begin{pmatrix}
    0 & 0 & 0 & 0 \\
    0 & I & 0 & 0 \\
    0 & 0 & I & 0 \\
    0 & 0 & 0 & 0
\end{pmatrix} ,\ \ \ 
C^A+C^B = \begin{pmatrix}
    I & 0 & 0 & 0 \\
    0 & 2I & 0 & 0 \\
    0 & 0 & I & 0 \\
    0 & 0 & 0 & 0
\end{pmatrix} \, .
\]
Notice that if manifold pair $(A,B)$ and $(A',B')$ have the same collections of activity patterns, then $C^A=C^{A'}$ and $C^B=C^{B'}$.

Next, the residual of the difference between regression performance mismatch and the (unnormalized) signal mismatch distance has the following expression:
\begin{align*}
&\left(\frac{LSE^{A}+LSE^{B}}{2}\right) - LSE^{A\cup B}-(\bar{x}^A-\bar{x}^B)^\top(C^A+C^B)^{-1}(\bar{x}^A-\bar{x}^B)\\
&= \frac{1}{2}(\bar{x}^A)^\top[(C^A)^{-1}-2(C^A+C^B)^{-1}](\bar{x}^A)+ \frac{1}{2}(\bar{x}^B)^\top[(C^B)^{-1}-2(C^A+C^B)^{-1}](\bar{x}^B) \, . 
\end{align*}
For convenience, we denote $\hat{C}_1=(C^A)^{-1}-2(C^A+C^B)^{-1}$ and $\hat{C}_2=(C^B)^{-1}-2(C^A+C^B)^{-1}$. Note that $\hat{C}_1$ and $\hat{C}_2$ are deterministic. Finally, as the stimulus representation within each manifold is also roational symmetric, we have that
\begin{align*}
&\mathop{\mathbb{E}}_{(A,B)}\big[(\bar{x}^A)^\top[(C^A)^{-1}-2(C^A+C^B)^{-1}](\bar{x}^A)\big]\\ 
=\, &\mathop{\mathbb{E}}_{(A,B)}\big[(\bar{x}^A)^\top\hat{C}_1(\bar{x}^A)\big]\\ =\, &\mathop{\mathbb{E}}_{(A',B')}\big[(\bar{x}^{A'})^\top\hat{C}_1(\bar{x}^{A'}\big]\\
=\, &\mathop{\mathbb{E}}_{(A',B')}\big[(\bar{x}^{A'})^\top[(C^{A'})^{-1}-2(C^{A'}+C^{B'})^{-1}](\bar{x}^{A'})\big]
\end{align*}
and similarly for the $B$ and $B'$ term.

We conclude that the two manifold pairs have the same residual of the difference between regression performance mismatch and the (unnormalized) signal mismatch distance in expectation.
\end{proof}

The above proposition can be interpreted as follows. In the simplest non-trivial scenario where the two manifolds are both rotational symmetric, then the residual of the difference between regression performance mismatch and the (unnormalized) signal mismatch distance would not contain any information about the correlation of the latent organization of the two manifolds.

\paragraph{Matching stimulus marginals.}
Before we formally state the result, let us first set up the concept of \textit{am ensemble pair of manifolds}. Mathematically, we say $(\mathbb{A},\mathbb{B})$ is an ensemble pair of manifolds if it is a distribution of two manifolds $(A,B)$ with fixed collections of activity patterns whereas the associated stimulus value $\theta$ has the same marginal distribution. Physically, this corresponds to invariant representations for two object classes where the underlying latent structures are undetermined.
Now, we are able to state the result regarding the connection between signal mismatch distance and the regression performance mismatch in the following proposition.
\begin{proposition}
Consider two ensemble pair of manifolds $(\mathbb{A},\mathbb{B})$ and $(\mathbb{A}',\mathbb{B}')$ with the same collections of activity patterns and the same marginal distributions for the stimulus, then we have
\begin{align*}
&\mathop{\mathbb{E}}_{(A,B)\sim(\mathbb{A},\mathbb{B})}\left[\left(\frac{LSE^{A}+LSE^{B}}{2}\right) - LSE^{A\cup B}-(\bar{x}^A-\bar{x}^B)^\top(C^A+C^B)^{-1}(\bar{x}^A-\bar{x}^B)\right]\\
=\, &\mathop{\mathbb{E}}_{(A',B')\sim(\mathbb{A}',\mathbb{B}')}\left[\left(\frac{LSE^{A'}+LSE^{B'}}{2}\right) - LSE^{A'\cup B'} - (\bar{x}^{A'}-\bar{x}^{B'})^\top(C^{A'}+C^{B'})^{-1}(\bar{x}^{A'}-\bar{x}^{B'})\right] \, .
\end{align*}
\end{proposition}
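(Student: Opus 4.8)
The plan is to follow the same route as the rotational-symmetry proposition: isolate a deterministic ``residual'' that depends only on the activity patterns, and then show that its stochastic signal part has an expectation fixed by the stimulus marginal.

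First I would recall the algebraic identity established in the preceding proof, which holds for any single realization $(A,B)$:
\begin{align*}
&\left(\frac{LSE^{A}+LSE^{B}}{2}\right) - LSE^{A\cup B} - (\bar{x}^A-\bar{x}^B)^\top(C^A+C^B)^{-1}(\bar{x}^A-\bar{x}^B)\\
&\qquad = \tfrac{1}{2}(\bar{x}^A)^\top\hat{C}_1(\bar{x}^A) + \tfrac{1}{2}(\bar{x}^B)^\top\hat{C}_2(\bar{x}^B),
\end{align*}
with $\hat{C}_1=(C^A)^{-1}-2(C^A+C^B)^{-1}$ and $\hat{C}_2=(C^B)^{-1}-2(C^A+C^B)^{-1}$. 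Because both ensembles are assumed to share the same collections of activity patterns, the covariances $C^A,C^B$ and hence the matrices $\hat{C}_1,\hat{C}_2$ are identical, deterministic objects for $(\mathbb{A},\mathbb{B})$ and $(\mathbb{A}',\mathbb{B}')$. Taking expectations and using linearity, the claim reduces to the scalar identity $\mathbb{E}[(\bar{x}^A)^\top\hat{C}_1(\bar{x}^A)]=\mathbb{E}[(\bar{x}^{A'})^\top\hat{C}_1(\bar{x}^{A'})]$ together with its analogue for the $B$ term.

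Next I would expand the signal vector over the fixed pool of activity patterns $\{x_k\}$, writing $\bar{x}^A=\langle\theta\,x^A(\theta,\psi)\rangle_A=\tfrac{1}{m}\sum_k\theta_k x_k$, so that each quadratic form linearizes into the stimulus labels:
\[
\mathbb{E}\big[(\bar{x}^A)^\top\hat{C}_1(\bar{x}^A)\big]=\tfrac{1}{m^2}\sum_{k,l}\mathbb{E}[\theta_k\theta_l]\,\big(x_k^\top\hat{C}_1 x_l\big).
\]
The coefficients $x_k^\top\hat{C}_1 x_l$ are fixed and shared between the two ensembles, so the entire expectation is controlled by the stimulus second-moment matrix $\mathbb{E}[\theta_k\theta_l]$.

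The crux, and the step I expect to be the main obstacle, is to show that $\mathbb{E}[\theta_k\theta_l]$ is pinned down by the stimulus marginal alone. The diagonal entries are immediate: $\mathbb{E}[\theta_k^2]$ is just the second moment of the matched marginal. The off-diagonal entries ($k\neq l$) are the delicate part, since a priori they involve the full joint law of the stimulus assignment rather than its marginal. Here I would make precise the symmetry implicit in the notion of an ensemble pair --- namely that the stimulus labels are assigned to the fixed patterns in an exchangeable (or independent) fashion drawn from the common marginal --- under which $\mathbb{E}[\theta_k\theta_l]$ for $k\neq l$ is a fixed affine function of the first two marginal moments, independent of the particular pair. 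With both the diagonal and off-diagonal moments thereby forced to agree across $(\mathbb{A},\mathbb{B})$ and $(\mathbb{A}',\mathbb{B}')$, the two expected quadratic forms coincide, the $B$ term follows identically, and the proposition is established. I would state this exchangeability assumption carefully before taking expectations, since without it the cross-moment contributions need not match even when the marginals do.
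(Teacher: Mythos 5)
Your proposal follows essentially the same route as the paper's proof: the same decomposition of the residual into the two quadratic forms $\tfrac{1}{2}(\bar{x}^A)^\top\hat{C}_1(\bar{x}^A)+\tfrac{1}{2}(\bar{x}^B)^\top\hat{C}_2(\bar{x}^B)$, and the same observation that fixing the activity patterns makes $C^A,C^B$ (hence $\hat{C}_1,\hat{C}_2$) deterministic and common to both ensembles. Where you diverge is at the final step, and the divergence is to your credit. The paper simply asserts that matching stimulus marginals implies $\mathbb{E}[f(\bar{x}^A,\bar{x}^B)]=\mathbb{E}[f(\bar{x}^{A'},\bar{x}^{B'})]$ for \emph{any} function $f$ --- i.e., that the joint law of $(\bar{x}^A,\bar{x}^B)$ is pinned down by the marginal of $\theta$ alone. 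That is stronger than the stated hypotheses deliver: as your expansion $\mathbb{E}[(\bar{x}^A)^\top\hat{C}_1(\bar{x}^A)]=\tfrac{1}{m^2}\sum_{k,l}\mathbb{E}[\theta_k\theta_l]\,x_k^\top\hat{C}_1 x_l$ makes explicit, the off-diagonal cross-moments $\mathbb{E}[\theta_k\theta_l]$ depend on how stimulus values are assigned to patterns, not just on the marginal. Your proposed exchangeability (or i.i.d./uniform-permutation) assumption on the assignment is exactly the extra structure needed to force those cross-moments to agree across the two ensembles, and it is the condition the paper's definition of an ``ensemble pair'' leaves implicit. So your argument is correct modulo stating that assumption, and it is in fact tighter than the paper's own proof at the one step that carries the mathematical content.
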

\begin{proof}
First, the residual of the difference between regression performance mismatch and the (unnormalized) signal mismatch distance has the following expression:
\begin{align*}
&\left(\frac{LSE^{A}+LSE^{B}}{2}\right) - LSE^{A\cup B}-(\bar{x}^A-\bar{x}^B)^\top(C^A+C^B)^{-1}(\bar{x}^A-\bar{x}^B)\\
&= \frac{1}{2}(\bar{x}^A)^\top[(C^A)^{-1}-2(C^A+C^B)^{-1}](\bar{x}^A)+ \frac{1}{2}(\bar{x}^B)^\top[(C^B)^{-1}-2(C^A+C^B)^{-1}](\bar{x}^B) \, . 
\end{align*}
Next, as the two ensemble pairs $(\mathbb{A},\mathbb{B})$ and $(\mathbb{A}',\mathbb{B}')$ share the same collections of fixed activity patterns, we have $C^{A}=C^{A'}$ and $C^B=C^{B'}$. Moreover, as the two ensemble pairs also share the same marginal distribution (of the stimulus $\theta$), we also have $\mathop{\mathbb{E}}[f(\bar{x}^A,\bar{x}^B)]=\mathop{\mathbb{E}}[f(\bar{x}^{A'},\bar{x}^{B'})]$ for any function $f$. Thus, we have
\begin{align*}
&\frac{1}{2}(\bar{x}^A)^\top[(C^A)^{-1}-2(C^A+C^B)^{-1}](\bar{x}^A)+ \frac{1}{2}(\bar{x}^B)^\top[(C^B)^{-1}-2(C^A+C^B)^{-1}](\bar{x}^B)\\
=\, &\frac{1}{2}(\bar{x}^{A'})^\top[(C^{A'})^{-1}-2(C^{A'}+C^{B'})^{-1}](\bar{x}^{A'})+ \frac{1}{2}(\bar{x}^{B'})^\top[(C^{B'})^{-1}-2(C^{A'}+C^{B'})^{-1}](\bar{x}^{B'}) \, . 
\end{align*}
We conclude that the two ensemble pairs have the same residual of the difference between regression performance mismatch and the (unnormalized) signal mismatch distance.
\end{proof}

The above proposition can be interpreted as follows. An ensemble pair of manifolds corresponds to a (mathematical) model for how a neural network represent two potentially correlated concepts with a shared latent feature. The two assumptions (fixed activity patterns and matching marginals) correspond to the empirical observations/representations one can record from the neural network. Note that there could be many ensemble pairs of manifolds having the same collections of activity patterns and marginal distributions, while some of them might have correlated latent structure (across manifolds), and some might not. The above proposition essentially says that the residual of the difference between regression performance mismatch and the (unnormalized) signal mismatch distance would not contain information about the correlation between the two manifolds.

\subsubsection{A generative model and numerical simulations}
In this subsection we define and investigate a synthetic model and its variants to understand the properties of the signal correlation mismatch distance.

Let $n$ be the number of neurons and let $m$ be the number of points in each manifold. Also, let $\tau\in[0,1]$ be a correlation parameter and let $\epsilon\geq0$ be a noise parameter. We generate the sample points and features of each manifold via the following process.
\begin{enumerate}
    \item Randomly sample $m$ independent points from the $n$-dimensional Gaussian distribution with mean $0$ and covariance $I$. Collect these points into manifold $A$.
    \item Generate the points in manifold $B$ using the same procedure.
    \item Randomly sample an unit vector $w_A$ in $\mathbb{R}^n$.
    \item Randomly sample another unit vector $w$ in $\mathbb{R}^n$ and let $w_B$ be the unit vector in the direction of $(1-\tau)w_A+\tau w$ where $\tau\in[0,1]$ is a parameter representing the correlation between the two manifolds.
    \item For each point $x$ in the manifold $A$ (resp. $B$), associate it with feature value $\theta=w_A^\top x+\epsilon \xi$ (resp. $\theta=w_B^\top x+\epsilon \xi$) where $\xi$ is a noise term sampled from the standard Gaussian distribution.
\end{enumerate}

We measure the (normalized) signal correlation mismatch distance of the above generative models and plot the results in Figure~\ref{fig:numerics 1}.

\begin{figure}[ht]
    \centering
    \includegraphics[width=8cm]{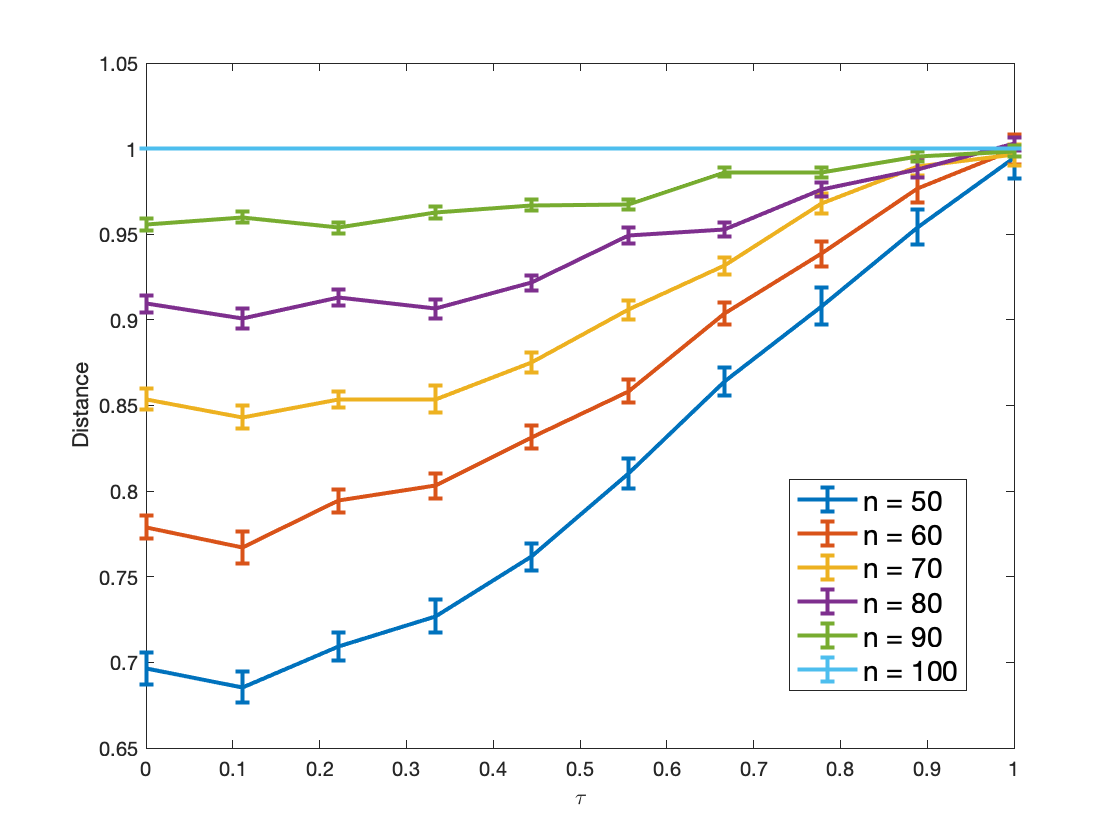}
    \caption{Simulations with $m=50$, $n=50,60,70,80,90,100$, $\epsilon=0.05$, and $\tau=0,0.1,0.2,\dots,1$.}
    \label{fig:numerics 1}
\end{figure}

Intuitively, the smaller the $\tau$ is, the more aligned the two manifolds are. Namely, we expect to see that the signal correlation mismatch distance grows as $\tau$ goes from $0$ to $1$ and indeed that's what happen in the numerical simulations as shown in Figure~\ref{fig:numerics 1}. Next, the increase in distance with respect to the growth of the number of neurons is a result of insufficient number of samples, which we will address in the following.

\paragraph{Manifolds coming from lower dimensional subspaces}
The previous simulation suggests that when the ambient dimension $n$ (i.e., the number of neurons) is much greater than the number of samples $m$, then we would not have enough points to estimate the distance correctly because the manifold dimension is intrinsically $n$. Meanwhile, in real data, we would expect the manifolds having a low intrinsic dimension and/or even overlap with each other in some subspaces. So here we consider a variant of the above generative model to reconcile these issues. Let $d_M$ be a parameter for the dimension of each manifold and let $d_{over}$ be a parameter for the dimension of the overlapped subspace of the two manifolds.
\begin{enumerate}
    \item Randomly sample $m$ independent points from the $d_M$-dimensional Gaussian distribution with mean $0$ and covariance $I$ in the $n$-dimensional ambient space. Collect these points into manifold $A$.
    \item Generate the points in manifold $B$ using the same procedure while randomly picking a $d_{over}$-dimensional subspace from manifold $A$ into the subspace of manifold $B$.
    \item Randomly sample an unit vector $w_A$ in $\mathbb{R}^n$.
    \item Randomly sample another unit vector $w$ in $\mathbb{R}^n$ and let $w_B$ be the unit vector in the direction of $(1-\tau)w_A+\tau w$ where $\tau\in[0,1]$ is a parameter representing the correlation between the two manifolds.
    \item For each point $x$ in the manifold $A$ (resp. $B$), associate it with feature value $\theta=w_A^\top x+\epsilon \xi$ (resp. $\theta=w_B^\top x+\epsilon \xi$) where $\xi$ is a noise term sampled from the standard Gaussian distribution.
\end{enumerate}

We numerically simulate different choices of overlapped dimension and plot the signal correlation mismatch distance as well as the regression loss in Figure~\ref{fig:numerics 2}.

\begin{figure}[ht]
    \centering
    \includegraphics[width=14cm]{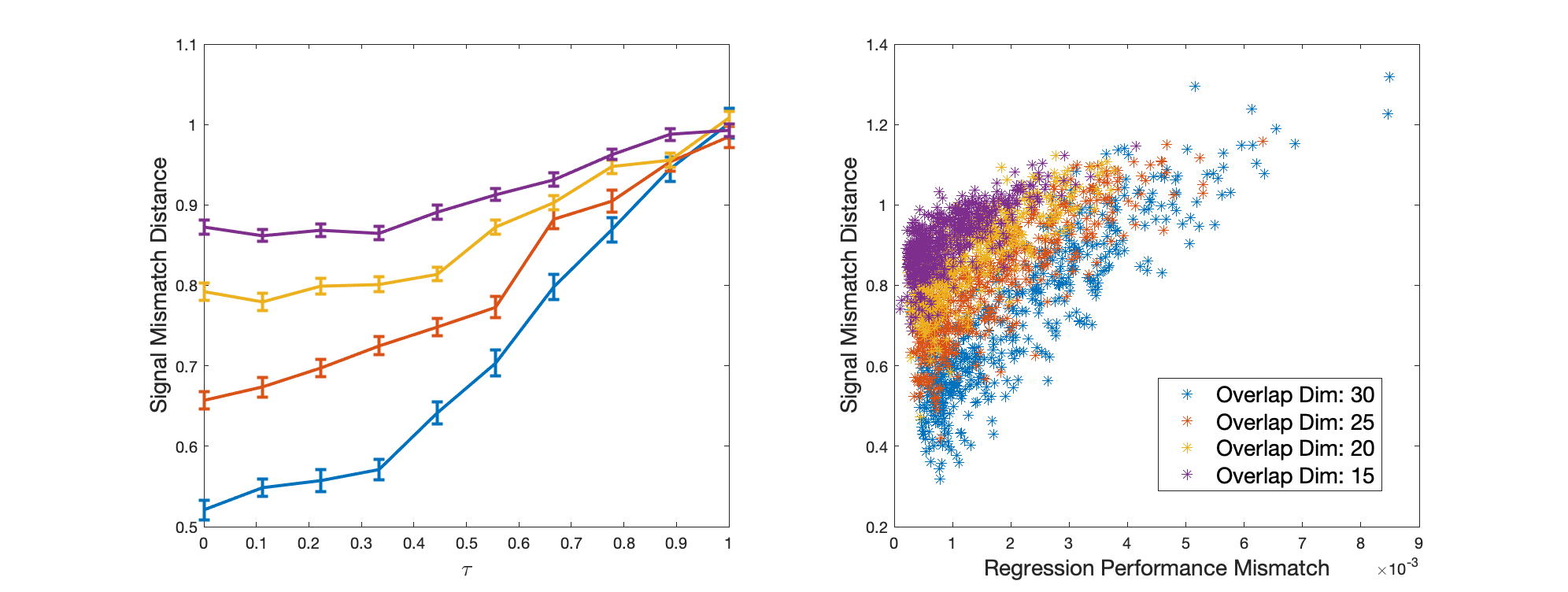}
    \caption{Simulations with $m=50$, $n=100$, $\epsilon=0.05$, $\tau=0,0.1,0.2,\dots,1$, $d_M=30$, $d_{over}=15,20,25,30$.}
    \label{fig:numerics 2}
\end{figure}

Intuitively, we expect the distance to be smaller when the two manifolds overlap with each other more. And this is confirmed in Figure~\ref{fig:numerics 2}(a). Moreover, we postulated earlier that the signal correlation mismatch distance will positively correlates with the regression loss (i.e., the loss of optimal regressor for both manifolds). Indeed we confirm this connection in the generative model as shown in Figure~\ref{fig:numerics 2}(b).

To sum up, we design a generative model of correlated manifolds with an 1D latent structure and test the (normalized) signal mismatch distance. Both Figure~\ref{fig:numerics 1} and Figure~\ref{fig:numerics 2} suggest that the (normalized) signal mismatch distance captures the latent correlation across the two manifolds as well as explains the regression performance mismatch as expected.

\subsection{Cross-Regression Performance Additional Details}
As mentioned in Section~\ref{sec:alignment performance}, the cross-regression performance is R2
achieved from training a linear regression model on one latent variation manifold around an exemplar
and evaluating on another.  More formally, the cross-regression performance from manifold A to manifold B is obtained via the following procedure:
\begin{itemize}
    \item Train a ridge regression model (linear regression with L2-normalization) on a manifold A.
    \item Evaluate the $R^2$ of the ridge regression model on manifold B.
\end{itemize}
The resulting $R^2$ value is the cross-regression performance from A to B.

\subsection{Model details} \label{model_appendix}
\begin{center}
\begin{tabular}{ |c|c|c| } 
 \hline
 Model & ImageNet Top-1 Accuracy (\%) & Source  \\ 
 \hline
 Supervised & 76.15 & Torchvision \\ %
 Supervised-RE & 78.47 & PyTorch Image Models \cite{rw2019timm} \\ %
 Supervised-RARE & 78.81 & PyTorch Image Models\\ 
 Barlow Twins & 73.5 & Facebook Research (PyTorch Hub)\\ %
 SwAV & 75.3 & Facebook Research (PyTorch Hub)\\ %
 VICReg & 73.2 & Facebook Research (PyTorch Hub)\\ %
 DeepClusterV2 & 75.18 & Facebook Research (VISSL)\\ %
 SimCLR & 68.8 & Facebook Research (VISSL)\\ %
 PIRL & 64.29 & Facebook 
 Research (VISSL)\\%
 BYOL & 74.6 & DeepMind \& PyTorch Conversion \footnotemark[2]\\
 
\hline
\end{tabular}
\end{center}
All models used a ResNet-50 architecture.

\footnotetext[2]{https://github.com/ajtejankar/byol-convert}

\newpage

\subsection{Deaggregated results} 
\subsubsection{MCT Results}
\begin{figure}[!htb]
    \centering{\includegraphics[width=1\textwidth]{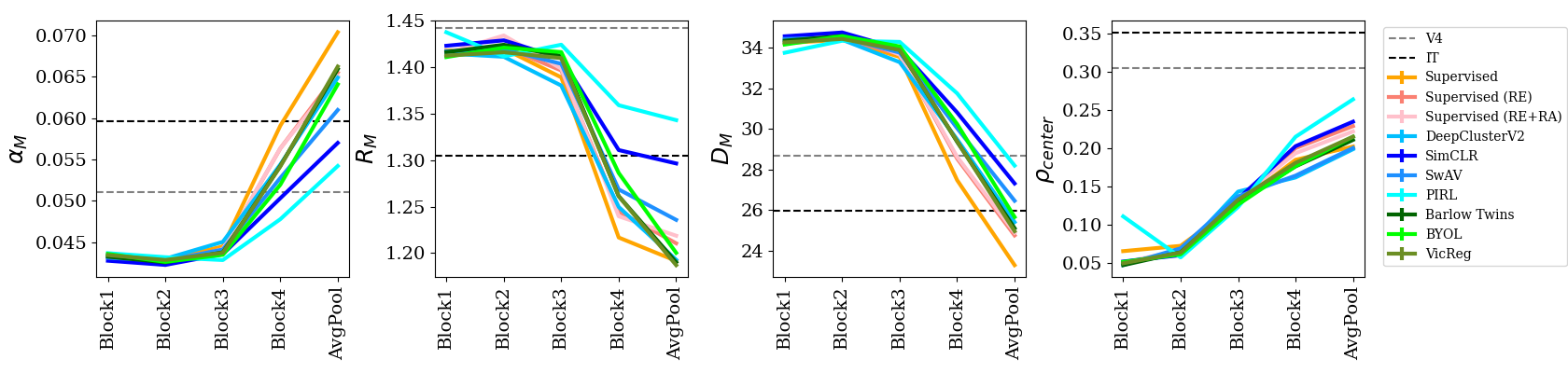}}
    \caption{As in Figure~\ref{fig:MCT main} we extracted the hidden representations of each image from the output of the four ResNet blocks in each model, and the terminal average pool layer and computed the manifold capacity, radius, dimension, and center correlations across the layers of the DNNs. We also performed MCT analysis on macaque brain recording (dashed lines).}
    \label{fig:pca overlap}
\end{figure}

\subsubsection{MAA Results}
\begin{figure}[!htb]
    \centering{\includegraphics[width=1\textwidth]{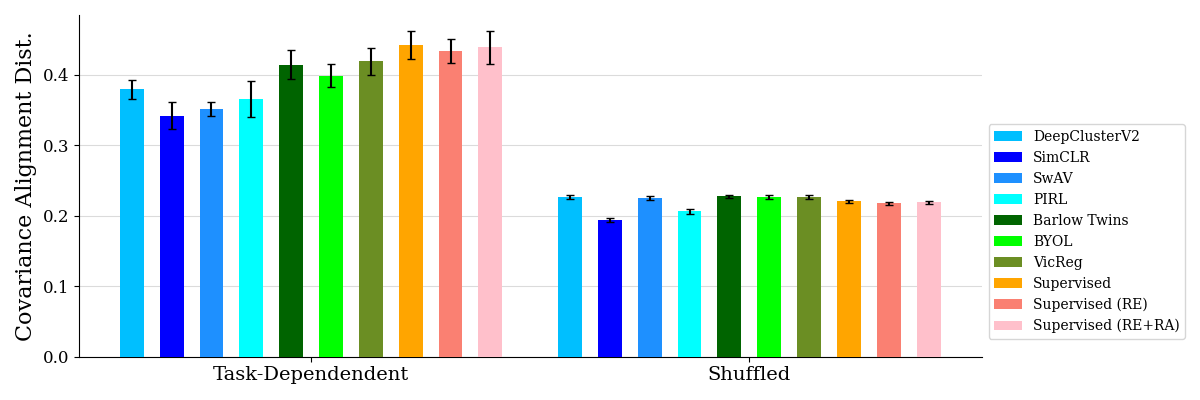}}
    \caption{As in Figure~\ref{fig:manifold alignment main}, we averaged the trace-normalized Bures covariance distance between the 8 super category manifolds in our dataset to quantify how aligned the object manifolds were in each model. We also repeat the experiment, but with shuffled manifolds (random partitioning). }
    \label{fig:pca overlap}
\end{figure}

\newpage

\subsubsection{Neural Comparison (Relative Analysis) Results}
\begin{figure}[!htb]
    \centering
    \subfigure[Position similarity]{\includegraphics[width=1\textwidth]{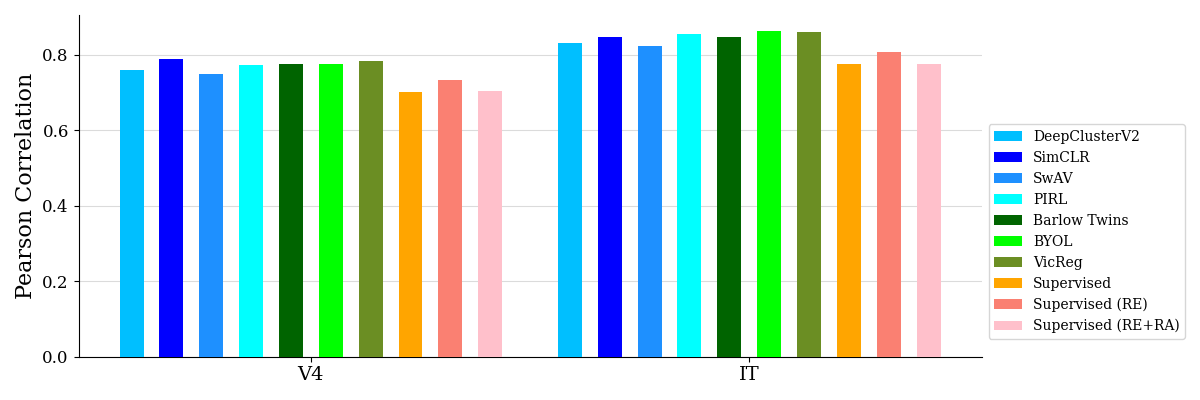}}
    \subfigure[Alignment similarity]{\includegraphics[width=1\textwidth]{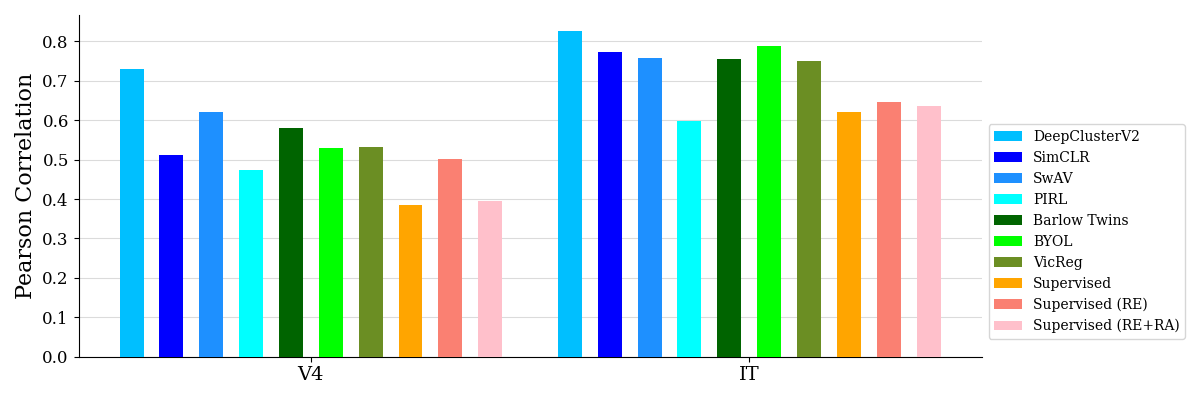}}
    \subfigure[RSA]{\includegraphics[width=1\textwidth]{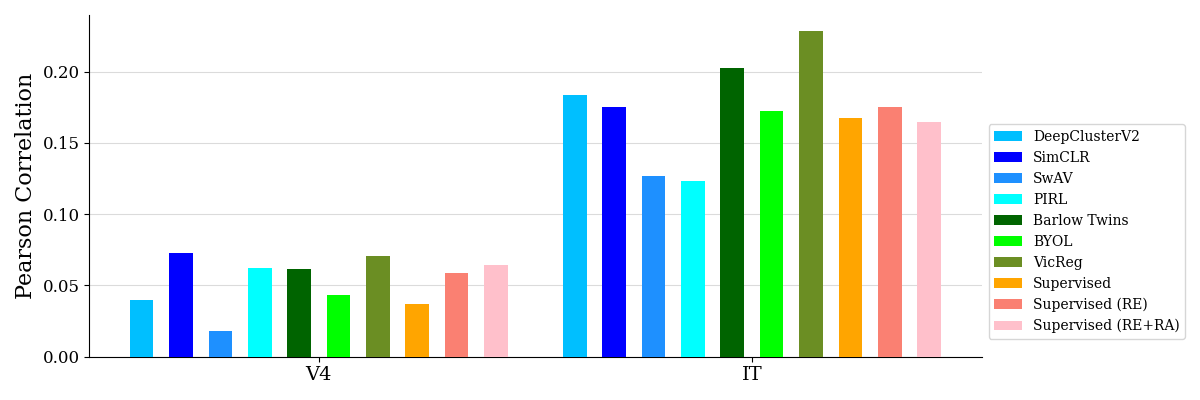}}
    \caption{Deaggragated results of Figure~\ref{fig:neural_correlation_results}, showing similarity to macaque neural data.  RSA fails to provide consistent trends when comparing supervised to unsupervised models.}
\end{figure}

\subsection{Code and Compute} 
The code used for new experiments will be released to the public.  The code used for Manifold Capacity Theory (MCT) analysis was obtained from: https://github.com/schung039/neural\_manifolds\_replicaMFT

These experiments were run on a computing cluster with about 1000 multicore nodes with up to 1TB of memory each.  GPUs were not used in these experiments.

\end{document}